\theoremstyle{plain}
\theoremstyle{definition}
\theoremstyle{remark}
\newcommand{\Oh}{\mathcal{O}}
\newcommand\NP{\ensuremath{\textsf{NP}}\xspace}
\newcommand\PSPACE{\ensuremath{\textsf{PSPACE}}\xspace}
\newcommand\Ptime{\ensuremath{\textsf{P}}\xspace}
\newcommand{\ra}[2]{
\draw [line width=0.5mm, black] (#1 + 0.7,#2 + 0.5) -- (#1 + 0.3,#2 + 0.75);
\draw [line width=0.5mm, black] (#1 + 0.7,#2 + 0.5) -- (#1 + 0.3,#2 + 0.25);
}
\newcommand{\la}[2]{
\draw [line width=0.5mm, black] (#1 + 0.3,#2 + 0.5) -- (#1 + 0.7,#2 + 0.75);
\draw [line width=0.5mm, black] (#1 + 0.3,#2 + 0.5) -- (#1 + 0.7,#2 + 0.25);
}
\newcommand{\ua}[2]{
\draw [line width=0.5mm, black] (#1 + 0.5,#2 + 0.7) -- (#1 + 0.75,#2 + 0.3);
\draw [line width=0.5mm, black] (#1 + 0.5,#2 + 0.7) -- (#1 + 0.25,#2 + 0.3);
}
\newcommand{\da}[2]{
\draw [line width=0.5mm, black] (#1 + 0.5,#2 + 0.3) -- (#1 + 0.75,#2 + 0.7);
\draw [line width=0.5mm, black] (#1 + 0.5,#2 + 0.3) -- (#1 + 0.25,#2 + 0.7);
}
\newcommand{\gra}[2]{
\draw [line width=0.5mm, gray] (#1 + 0.7,#2 + 0.5) -- (#1 + 0.3,#2 + 0.75);
\draw [line width=0.5mm, gray] (#1 + 0.7,#2 + 0.5) -- (#1 + 0.3,#2 + 0.25);
}
\newcommand{\cra}[2]{
	\draw [line width=0.5mm, cyan] (#1 + 0.7,#2 + 0.5) -- (#1 + 0.3,#2 + 0.75);
	\draw [line width=0.5mm, cyan] (#1 + 0.7,#2 + 0.5) -- (#1 + 0.3,#2 + 0.25);
}
\newcommand{\mra}[2]{
\draw [line width=0.5mm, magenta] (#1 + 0.7,#2 + 0.5) -- (#1 + 0.3,#2 + 0.75);
\draw [line width=0.5mm, magenta] (#1 + 0.7,#2 + 0.5) -- (#1 + 0.3,#2 + 0.25);
}
\newcommand{\cla}[2]{
\draw [line width=0.5mm, cyan] (#1 + 0.3,#2 + 0.5) -- (#1 + 0.7,#2 + 0.75);
\draw [line width=0.5mm, cyan] (#1 + 0.3,#2 + 0.5) -- (#1 + 0.7,#2 + 0.25);
}
\newcommand{\gla}[2]{
	\draw [line width=0.5mm, gray] (#1 + 0.3,#2 + 0.5) -- (#1 + 0.7,#2 + 0.75);
	\draw [line width=0.5mm, gray] (#1 + 0.3,#2 + 0.5) -- (#1 + 0.7,#2 + 0.25);
}
\newcommand{\bla}[2]{
\draw [line width=0.5mm, blue] (#1 + 0.3,#2 + 0.5) -- (#1 + 0.7,#2 + 0.75);
\draw [line width=0.5mm, blue] (#1 + 0.3,#2 + 0.5) -- (#1 + 0.7,#2 + 0.25);
}
\newcommand{\mla}[2]{
\draw [line width=0.5mm, magenta] (#1 + 0.3,#2 + 0.5) -- (#1 + 0.7,#2 + 0.75);
\draw [line width=0.5mm, magenta] (#1 + 0.3,#2 + 0.5) -- (#1 + 0.7,#2 + 0.25);
}
\newcommand{\gua}[2]{
\draw [line width=0.5mm, gray] (#1 + 0.5,#2 + 0.7) -- (#1 + 0.75,#2 + 0.3);
\draw [line width=0.5mm, gray] (#1 + 0.5,#2 + 0.7) -- (#1 + 0.25,#2 + 0.3);
}
\newcommand{\cua}[2]{
	\draw [line width=0.5mm, cyan] (#1 + 0.5,#2 + 0.7) -- (#1 + 0.75,#2 + 0.3);
	\draw [line width=0.5mm, cyan] (#1 + 0.5,#2 + 0.7) -- (#1 + 0.25,#2 + 0.3);
}
\newcommand{\bua}[2]{
\draw [line width=0.5mm, blue] (#1 + 0.5,#2 + 0.7) -- (#1 + 0.75,#2 + 0.3);
\draw [line width=0.5mm, blue] (#1 + 0.5,#2 + 0.7) -- (#1 + 0.25,#2 + 0.3);
}
\newcommand{\mua}[2]{
\draw [line width=0.5mm, magenta] (#1 + 0.5,#2 + 0.7) -- (#1 + 0.75,#2 + 0.3);
\draw [line width=0.5mm, magenta] (#1 + 0.5,#2 + 0.7) -- (#1 + 0.25,#2 + 0.3);
}
\newcommand{\gda}[2]{
\draw [line width=0.5mm, gray] (#1 + 0.5,#2 + 0.3) -- (#1 + 0.75,#2 + 0.7);
\draw [line width=0.5mm, gray] (#1 + 0.5,#2 + 0.3) -- (#1 + 0.25,#2 + 0.7);
}
\newcommand{\cda}[2]{
	\draw [line width=0.5mm, cyan] (#1 + 0.5,#2 + 0.3) -- (#1 + 0.75,#2 + 0.7);
	\draw [line width=0.5mm, cyan] (#1 + 0.5,#2 + 0.3) -- (#1 + 0.25,#2 + 0.7);
}
\newcommand{\bda}[2]{
\draw [line width=0.5mm, blue] (#1 + 0.5,#2 + 0.3) -- (#1 + 0.75,#2 + 0.7);
\draw [line width=0.5mm, blue] (#1 + 0.5,#2 + 0.3) -- (#1 + 0.25,#2 + 0.7);
}
\newcommand{\mda}[2]{
\draw [line width=0.5mm, magenta] (#1 + 0.5,#2 + 0.3) -- (#1 + 0.75,#2 + 0.7);
\draw [line width=0.5mm, magenta] (#1 + 0.5,#2 + 0.3) -- (#1 + 0.25,#2 + 0.7);
}
\newcommand{\DA}{\begin{tikzpicture}[scale=0.32]\useasboundingbox (0,0.2) rectangle (1, 1.2);	\draw [step =1, black]grid (1,1);\da{0}{-0.0}\end{tikzpicture}}
\newcommand{\UA}{\begin{tikzpicture}[scale=0.32]\useasboundingbox (0,0.2) rectangle (1, 1.2);\draw [step =1, black]grid (1,1);\ua{0}{0}\end{tikzpicture}}
\newcommand{\RA}{\begin{tikzpicture}[scale=0.32]\useasboundingbox (0,0.2) rectangle (1, 1.2); \draw [step =1, black]grid (1,1);\ra{0}{0}\end{tikzpicture}}
\newcommand{\LA}{\begin{tikzpicture}[scale=0.32]\useasboundingbox (0,0.2) rectangle (1, 1.2);\draw [step =1, black]grid (1,1);\la{0}{0}\end{tikzpicture}}
\newcommand{\LB}{\left[}
\newcommand{\LP}{\left(}
\newcommand{\RB}{\right]}
\newcommand{\RP}{\right)}
\newcommand{\ix}[2]{
\node[text width=1cm, anchor=west] at (#1 - 0.08,#2 + 0.3)
    {$(#1,\!#2)$};
}
\newcommand{\literal}[2]{
\draw [line width=1mm, black] (#1 + 0,#2 + 0) -- (#1 + 4,#2 + 0);
\draw [line width=1mm, black] (#1 + 6,#2 + 0) -- (#1 + 10,#2 + 0);
\draw [line width=1mm, black] (#1 + 4,#2 + 1) -- (#1 + 6,#2 + 1);
\draw [line width=1mm, black] (#1 + 0,#2 + 3) -- (#1 + 3,#2 + 3);
\draw [line width=1mm, black] (#1 + 7,#2 + 3) -- (#1 + 10,#2 + 3);
\draw [line width=1mm, black] (#1 + 3,#2 + 6) -- (#1 + 7,#2 + 6);
\draw [line width=1mm, black] (#1 + 0,#2 + 0) -- (#1 + 0,#2 + 3);
\draw [line width=1mm, black] (#1 + 3,#2 + 3) -- (#1 + 3,#2 + 6);
\draw [line width=1mm, black] (#1 + 4,#2 + 0) -- (#1 + 4,#2 + 1);
\draw [line width=1mm, black] (#1 + 6,#2 + 0) -- (#1 + 6,#2 + 1);
\draw [line width=1mm, black] (#1 + 7,#2 + 3) -- (#1 + 7,#2 + 6);
\draw [line width=1mm, black] (#1 + 10,#2 + 0) -- (#1 + 10,#2 + 3);

\draw [line width=1mm, black] (#1 + 7,#2 + 8) -- (#1 + 9,#2 + 8);

\draw [line width=1mm, black] (#1 + 7,#2 + 7) -- (#1 + 7,#2 + 8);
\draw [line width=1mm, black] (#1 + 7,#2 + 6) -- (#1 + 7,#2 + 7);
\draw [line width=1mm, black] (#1 + 9,#2 + 3) -- (#1 + 9,#2 + 7);
\draw [line width=1mm, black] (#1 + 9,#2 + 7) -- (#1 + 9,#2 + 8);
}
\newcommand{\literaln}[2]{
\draw [line width=1mm, black] (#1 + 0,#2 + 0) -- (#1 + 4,#2 + 0);
\draw [line width=1mm, black] (#1 + 6,#2 + 0) -- (#1 + 10,#2 + 0);
\draw [line width=1mm, black] (#1 + 4,#2 + 1) -- (#1 + 6,#2 + 1);
\draw [line width=1mm, black] (#1 + 0,#2 + 3) -- (#1 + 3,#2 + 3);
\draw [line width=1mm, black] (#1 + 7,#2 + 3) -- (#1 + 10,#2 + 3);
\draw [line width=1mm, black] (#1 + 3,#2 + 6) -- (#1 + 7,#2 + 6);
\draw [line width=1mm, black] (#1 + 0,#2 + 0) -- (#1 + 0,#2 + 3);
\draw [line width=1mm, black] (#1 + 3,#2 + 3) -- (#1 + 3,#2 + 6);
\draw [line width=1mm, black] (#1 + 4,#2 + 0) -- (#1 + 4,#2 + 1);
\draw [line width=1mm, black] (#1 + 6,#2 + 0) -- (#1 + 6,#2 + 1);
\draw [line width=1mm, black] (#1 + 7,#2 + 3) -- (#1 + 7,#2 + 6);
\draw [line width=1mm, black] (#1 + 10,#2 + 0) -- (#1 + 10,#2 + 3);

\draw [line width=1mm, black] (#1 + 2,#2 + 8) -- (#1 + 7,#2 + 8);

\draw [line width=1mm, black] (#1 + 2,#2 + 3) -- (#1 + 2,#2 + 8);
\draw [line width=1mm, black] (#1 + 7,#2 + 6) -- (#1 + 7,#2 + 8);
}
\newcommand{\variable}[2]{
\draw [line width=0.6mm, black] (#1 + 3,#2 + 0) -- (#1 + 7,#2 + 0);
\draw [line width=0.6mm, black] (#1 + 7,#2 + 0) -- (#1 + 7,#2 + 1);
\draw [line width=0.6mm, black] (#1 + 7,#2 + 1) -- (#1 + 9,#2 + 1);
\draw [line width=0.6mm, black] (#1 + 9,#2 + 1) -- (#1 + 9,#2 + 0);
\draw [line width=0.6mm, black] (#1 + 9,#2 + 0) -- (#1 + 13,#2 + 0);
\draw [line width=0.6mm, black] (#1 + 13,#2 + 0) -- (#1 + 13,#2 + 7);
\draw [line width=0.6mm, black] (#1 + 13,#2 + 7) -- (#1 + 11,#2 + 7);
\draw [line width=0.6mm, black] (#1 + 11,#2 + 7) -- (#1 + 11,#2 + 9);
\draw [line width=0.6mm, black] (#1 + 11,#2 + 9) -- (#1 + 10,#2 + 9);
\draw [line width=0.6mm, black] (#1 + 10,#2 + 9) -- (#1 + 10,#2 + 11);
\draw [line width=0.6mm, black] (#1 + 10,#2 + 11) -- (#1 + 6,#2 + 11);
\draw [line width=0.6mm, black] (#1 + 6,#2 + 11) -- (#1 + 6,#2 + 10);
\draw [line width=0.6mm, black] (#1 + 6,#2 + 10) -- (#1 + 4,#2 + 10);
\draw [line width=0.6mm, black] (#1 + 4,#2 + 10) -- (#1 + 4,#2 + 11);
\draw [line width=0.6mm, black] (#1 + 4,#2 + 11) -- (#1 + 0,#2 + 11);
\draw [line width=0.6mm, black] (#1 + 0,#2 + 11) -- (#1 + 0,#2 + 5);
\draw [line width=0.6mm, black] (#1 + 0,#2 + 5) -- (#1 + 3,#2 + 5);
\draw [line width=0.6mm, black] (#1 + 3,#2 + 5) -- (#1 + 3,#2 + 4);
\draw [line width=0.6mm, black] (#1 + 3,#2 + 4) -- (#1 + 2,#2 + 4);
\draw [line width=0.6mm, black] (#1 + 2,#2 + 4) -- (#1 + 2,#2 + 2);
\draw [line width=0.6mm, black] (#1 + 2,#2 + 2) -- (#1 + 3,#2 + 2);
\draw [line width=0.6mm, black] (#1 + 3,#2 + 2) -- (#1 + 3,#2 + 0);
}
\newcommand{\clause}[2]{
\draw [line width=0.6mm, black] (#1 + 0,#2 + 0) -- (#1 + 4,#2 + 0);
\draw [line width=0.6mm, black] (#1 + 4,#2 + 0) -- (#1 + 4,#2 + 1);
\draw [line width=0.6mm, black] (#1 + 4,#2 + 1) -- (#1 + 6,#2 + 1);
\draw [line width=0.6mm, black] (#1 + 6,#2 + 1) -- (#1 + 6,#2 + 0);
\draw [line width=0.6mm, black] (#1 + 6,#2 + 0) -- (#1 + 10,#2 + 0);
\draw [line width=0.6mm, black] (#1 + 10,#2 + 0) -- (#1 + 10,#2 + 3);
\draw [line width=0.6mm, black] (#1 + 10,#2 + 3) -- (#1 + 12,#2 + 3);
\draw [line width=0.6mm, black] (#1 + 12,#2 + 3) -- (#1 + 12,#2 + 0);
\draw [line width=0.6mm, black] (#1 + 12,#2 + 0) -- (#1 + 16,#2 + 0);
\draw [line width=0.6mm, black] (#1 + 16,#2 + 0) -- (#1 + 16,#2 + 1);
\draw [line width=0.6mm, black] (#1 + 16,#2 + 1) -- (#1 + 18,#2 + 1);
\draw [line width=0.6mm, black] (#1 + 18,#2 + 1) -- (#1 + 18,#2 + 0);
\draw [line width=0.6mm, black] (#1 + 18,#2 + 0) -- (#1 + 22,#2 + 0);
\draw [line width=0.6mm, black] (#1 + 22,#2 + 0) -- (#1 + 22,#2 + 3);
\draw [line width=0.6mm, black] (#1 + 22,#2 + 3) -- (#1 + 24,#2 + 3);
\draw [line width=0.6mm, black] (#1 + 24,#2 + 3) -- (#1 + 24,#2 + 0);
\draw [line width=0.6mm, black] (#1 + 24,#2 + 0) -- (#1 + 28,#2 + 0);
\draw [line width=0.6mm, black] (#1 + 28,#2 + 0) -- (#1 + 28,#2 + 1);
\draw [line width=0.6mm, black] (#1 + 28,#2 + 1) -- (#1 + 30,#2 + 1);
\draw [line width=0.6mm, black] (#1 + 30,#2 + 1) -- (#1 + 30,#2 + 0);
\draw [line width=0.6mm, black] (#1 + 30,#2 + 0) -- (#1 + 34,#2 + 0);
\draw [line width=0.6mm, black] (#1 + 34,#2 + 0) -- (#1 + 34,#2 + 9);
\draw [line width=0.6mm, black] (#1 + 34,#2 + 9) -- (#1 + 0,#2 + 9);
\draw [line width=0.6mm, black] (#1 + 0,#2 + 9) -- (#1 + 0,#2 + 0);
}
\newcommand{\clauser}[2]{
\draw [line width=0.6mm, black] (#1 + 0,#2 - 0) -- (#1 + 4,#2 - 0);
\draw [line width=0.6mm, black] (#1 + 4,#2 - 0) -- (#1 + 4,#2 - 1);
\draw [line width=0.6mm, black] (#1 + 4,#2 - 1) -- (#1 + 6,#2 - 1);
\draw [line width=0.6mm, black] (#1 + 6,#2 - 1) -- (#1 + 6,#2 - 0);
\draw [line width=0.6mm, black] (#1 + 6,#2 - 0) -- (#1 + 10,#2 - 0);
\draw [line width=0.6mm, black] (#1 + 10,#2 - 0) -- (#1 + 10,#2 - 3);
\draw [line width=0.6mm, black] (#1 + 10,#2 - 3) -- (#1 + 12,#2 - 3);
\draw [line width=0.6mm, black] (#1 + 12,#2 - 3) -- (#1 + 12,#2 - 0);
\draw [line width=0.6mm, black] (#1 + 12,#2 - 0) -- (#1 + 16,#2 - 0);
\draw [line width=0.6mm, black] (#1 + 16,#2 - 0) -- (#1 + 16,#2 - 1);
\draw [line width=0.6mm, black] (#1 + 16,#2 - 1) -- (#1 + 18,#2 - 1);
\draw [line width=0.6mm, black] (#1 + 18,#2 - 1) -- (#1 + 18,#2 - 0);
\draw [line width=0.6mm, black] (#1 + 18,#2 - 0) -- (#1 + 22,#2 - 0);
\draw [line width=0.6mm, black] (#1 + 22,#2 - 0) -- (#1 + 22,#2 - 3);
\draw [line width=0.6mm, black] (#1 + 22,#2 - 3) -- (#1 + 24,#2 - 3);
\draw [line width=0.6mm, black] (#1 + 24,#2 - 3) -- (#1 + 24,#2 - 0);
\draw [line width=0.6mm, black] (#1 + 24,#2 - 0) -- (#1 + 28,#2 - 0);
\draw [line width=0.6mm, black] (#1 + 28,#2 - 0) -- (#1 + 28,#2 - 1);
\draw [line width=0.6mm, black] (#1 + 28,#2 - 1) -- (#1 + 30,#2 - 1);
\draw [line width=0.6mm, black] (#1 + 30,#2 - 1) -- (#1 + 30,#2 - 0);
\draw [line width=0.6mm, black] (#1 + 30,#2 - 0) -- (#1 + 34,#2 - 0);
\draw [line width=0.6mm, black] (#1 + 34,#2 - 0) -- (#1 + 34,#2 - 9);
\draw [line width=0.6mm, black] (#1 + 34,#2 - 9) -- (#1 + 0,#2 - 9);
\draw [line width=0.6mm, black] (#1 + 0,#2 - 9) -- (#1 + 0,#2 - 0);
}
\newcommand{\edge}[2]{
\draw [line width=0.6mm, black] (#1 + 0,#2 + 0) -- (#1 + 4,#2 + 0);
\draw [line width=0.6mm, black] (#1 + 4,#2 + 0) -- (#1 + 4,#2 + 3);
\draw [line width=0.6mm, black] (#1 + 0,#2 + 3) -- (#1 + 4,#2 + 3);
\draw [line width=0.6mm, black] (#1 + 0,#2 + 3) -- (#1 + 0,#2 + 0);
}
\newcommand{\fanout}[2]{
\draw [line width=0.6mm, black] (#1 + 0,#2 + 0) -- (#1 + 4,#2 + 0);
\draw [line width=0.6mm, black] (#1 + 4,#2 + 1) -- (#1 + 4,#2 + 0);
\draw [line width=0.6mm, black] (#1 + 4,#2 + 1) -- (#1 + 6,#2 + 1);
\draw [line width=0.6mm, black] (#1 + 6,#2 + 0) -- (#1 + 6,#2 + 1);
\draw [line width=0.6mm, black] (#1 + 6,#2 + 0) -- (#1 + 10,#2 + 0);
\draw [line width=0.6mm, black] (#1 + 10,#2 + 3) -- (#1 + 10,#2 + 0);
\draw [line width=0.6mm, black] (#1 + 10,#2 + 3) -- (#1 + 6,#2 + 3);
\draw [line width=0.6mm, black] (#1 + 6,#2 + 2) -- (#1 + 6,#2 + 3);
\draw [line width=0.6mm, black] (#1 + 6,#2 + 2) -- (#1 + 4,#2 + 2);
\draw [line width=0.6mm, black] (#1 + 4,#2 + 3) -- (#1 + 4,#2 + 2);
\draw [line width=0.6mm, black] (#1 + 4,#2 + 3) -- (#1 + 0,#2 + 3);
\draw [line width=0.6mm, black] (#1 + 0,#2 + 0) -- (#1 + 0,#2 + 3);
}
\newcommand{\arrowloopl}[3]{
\foreach \i in {#1,...,#2}{
\la{\i}{#3}
}
}
\def\BState{\State\hskip-\ALG@thistlm}
\begin{document}
\title{All Paths Lead to Rome} 



\author{Kevin Goergen, 
 Henning Fernau, 
 Esther Oest, 
 Petra Wolf\thanks{Supported by Deutsche Forschungsgemeinschaft (DFG), project
	FE560/9-1.}
}



\date{Universit\"at Trier, Fachbereich~4 -- Abteilung Informatikwissenschaften\\  
54286 Trier, Germany.\\
\{s4kegoer,fernau,s4esmeck,wolfp\}@uni-trier.de}

\maketitle

\begin{abstract}
\emph{All roads lead to Rome} is the core idea of the puzzle game \emph{Roma}.
It is played on an $n \times n$ grid consisting of quadratic cells. Those cells are grouped into boxes of at most four neighboring cells and are either filled, or to be filled, with arrows pointing in cardinal directions. The goal of the game is to fill the empty cells with arrows such that each box contains at most one arrow of each direction and regardless where we start, if we follow the arrows in the cells, we will always end up in the special Roma-cell.
In this work, we study the computational complexity of the puzzle game Roma and show that completing a Roma board according to the rules is an \NP-complete task, counting the number of valid completions is $\#$\Ptime-complete, and determining the number of preset arrows needed to make the instance \emph{uniquely} solvable is $\Sigma_2^P$-complete.
We further show that the problem of completing a given Roma instance on an $n\times n$ board cannot be solved in time $\mathcal{O}\left(2^{{o}(n)}\right)$ under ETH and give a matching dynamic programming algorithm based on the idea of Catalan structures.
\end{abstract}

\section{Introduction}
With computational devices in nearly everyone's pockets nowadays, the opportunities to play puzzle games on these devices are plentiful.
What makes such games so addictive that they are played every day by millions of people?
One possible answer to the suggested question is that (generalized variants of) these games are computationally intractable~\cite{demaine2001playing, kendall2008survey}, which could explain why it can be so challenging to find a solution or to get a good score.
 In this work, we study the puzzle game \emph{Roma} (a playable version can be found in~\cite{Roma}), which we describe in more detail in the next section. Roma has similarities to other puzzle games, such as the famous Sudoku puzzle, shown to be \NP-complete in~\cite{DBLP:journals/ieiceta/YatoS03}, in the sense that the player has to fill out fields in a two-dimensional board, taking into account hints and restrictions given by the concrete instance of the game. 
Puzzle games of this sort such as Kakuro~\cite{DBLP:conf/fun/RueppH10}, Herugolf and Makaro~\cite{DBLP:conf/fun/IwamotoHI18}, Dosun-Fuwari~\cite{DBLP:journals/jip/IwamotoI18}, or Ying-Yang puzzles~\cite{DBLP:conf/cccg/DemaineLRU21} were shown to be \NP-complete.
However, Roma is motivated by a concrete \emph{planning task}: the player is asked to design a map of one-way roads under certain restrictions so that finally, one can travel to the central place (Rome) from each position on the map. Roma can hence be used to explain the difficulties of planning and design in a playful way.

The field of computational complexity of games and computer games is a broad vivid field as it also allows a playful entry to the field of computational complexity theory, see for instance the surveys by Demaine et~al.~\cite{demaine2001playing} and Kendall et~al.~\cite{kendall2008survey}. The importance of the field is also reflected in a huge number of publications at different international conferences over decades, such as in the conferences JCDCG3~\cite{DBLP:journals/gc/AkiyamaISU20} and FUN~\cite{DBLP:conf/fun/2022}.
The study of games is not only a fun topic but also allows for a deeper understanding of fundamental concepts in theoretical computer science. For instance, the game of cops and robbers played on some graph $G$ has a direct connection with the treewidth of $G$~\cite{bonato2011game,seymour1993graph}, 
one of the most important structural parameters in parameterized complexity theory. 
Further, game variants of problems can be used to study a problem from a different perspective, for instance, a two-player variant of the satisfiability problem is equivalent to the \PSPACE-complete quantified SAT problem and can be even harder if we inherit rules native to the game Go~\cite{DBLP:journals/jacm/ChandraKS81,DBLP:journals/siamcomp/StockmeyerC79,DBLP:conf/mfcs/Robson84}.
Hence, studying (computer) games is a great way to better understand combinatorial concepts. 
In recent years, two notable lines of research developed in this field. One is trying to generalize the combinatorial key mechanics of a game and studies the complexity of this combinatorial mechanics through so-called \emph{metatheorems}   \cite{demaine2008constraint,forivsek2010computational,Viglietta2014,demaine2016computational,DBLP:conf/fun/HaanW18}.
Another line of research focuses more on the individual games~\cite{robson1983complexity, robson1984n,DBLP:journals/jct/FraenkelL81,DBLP:journals/tcs/Viglietta15,DBLP:conf/fun/ChurchillBH21,DBLP:conf/fun/BrunnerCDHHSZ21,DBLP:conf/fun/Almanza0P18} 
and also takes a deeper analysis with respect to parameterized complexity theory
\cite{DBLP:conf/concur/BruyereHR18,DBLP:conf/icalp/BonnetGLRS17,bjorklund2003fixed,Feretal03}.
With this work, we are going to contribute to the second line: we analyze the complexity of the game \emph{Roma} also from a parameterized angle.

\paragraph{Our contribution.}
We show that the question whether a partially filled instance of Roma can be completed according to the rules of Roma is an \NP-complete problem by a reduction from \textsc{Planar 3SAT} (Theorem~\ref{thm:NPRoma}). As this reduction is parsimonious, we directly get that the counting variant of Roma, counting the number of solutions, is $\#$\Ptime-complete (Theorem~\ref{thm:CountRoma}). The parsimonious reduction further implies that the question of how many hints must be added to a Roma instance in order to make it \emph{uniquely} solvable is $\Sigma_2^P$ complete (Theorem~\ref{thm:FCPRoma}). We show that the reduction by Lichtenstein~\cite{DBLP:journals/siamcomp/Lichtenstein82} from \textsc{3SAT} to \textsc{Planar 3SAT} can be translated into our Roma setting with only a constant factor increase in space. Especially, we have that the number of variables and the number of clauses each correspond to the dimension $n$ of an $n\times n$ Roma board and hence, assuming ETH, Roma cannot be solved in time $\mathcal{O}\left(2^{{o}(n)}\right)$ (Theorem~\ref{thm:ETH-Roma}).
As our second main result, we match this lower bound by a dynamic programming algorithm, we believe to be interesting, using the idea of Catalan structures (Theorem~\ref{thm:DPalgo}).

\subsection*{The Rules of Roma}
Roma is a one-person puzzle game. A Roma board consists of a quadratic game board, which in turn consists of $n \times n$ quadratic individual cells. One of these cells is a previously determined \emph{Roma-cell} which serves as a target cell. Cells which directly border on each other are called \emph{true neighbors}.\footnote{In  cellular automata theory, this notion of neighborhood is known as \emph{von-Neumann-neighborhood}. In image processing, this resembles the notion of  \emph{4-connected pixels}.}  Cells which are true neighbors can be gathered in a collection called a \emph{box}. These boxes can consist of 1 to 4 cells. The boxes are preset at the beginning of a game and every cell is contained in one box. The boxes can take any form, as long as every cell within a box can be reached from any other cell within that box by only traversing other cells from the same box, where traversal refers to single-cell steps from one cell to one of its true neighbors. The Roma-cell is always contained in its own 1-box. Each empty cell must be filled by the player with an arrow, pointing in one of the four cardinal directions. Cells can contain preset arrows before the game starts. Each box can contain only one arrow pointing in a given cardinal direction.
The goal of the game is to fill each cell in such a manner that, beginning in any cell within the board, following the arrows step by step will always lead to the Roma-cell. 
An example game board may look as displayed in \autoref{fig:RomaEx}. 
A more mathematical description of the game will follow next.\footnote{From here on, we will refer to the formal decision problem as \textsc{Roma} as opposed to the game Roma.}

\begin{figure}[tb]
\begin{center}
\begin{minipage}{.23\textwidth}
\scalebox{.6}{\begin{tikzpicture}[scale=0.79]
\draw [step =1, black] (-1,-1) grid (5,5);
\draw (1.5,2.5) circle (8pt);

\draw [line width=1.1mm, black] (0,0) -- (4,0);
\draw [line width=1.1mm, black] (2,1) -- (3,1);
\draw [line width=1.1mm, black] (1,2) -- (2,2);
\draw [line width=1.1mm, black] (3,2) -- (4,2);
\draw [line width=1.1mm, black] (0,3) -- (2,3);
\draw [line width=1.1mm, black] (0,4) -- (4,4);

\draw [line width=1.1mm, black] (0,0) -- (0,4);
\draw [line width=1.1mm, black] (1,0) -- (1,4);
\draw [line width=1.1mm, black] (2,1) -- (2,3);
\draw [line width=1.1mm, black] (3,0) -- (3,4);
\draw [line width=1.1mm, black] (4,0) -- (4,4);

\ra{1}{3}

\la{1}{0}

\ua{2}{0}
\ua{3}{0}
\ua{3}{2}

\da{0}{3}

\ix{0}{0}
\ix{0}{1}
\ix{3}{1}
\ix{3}{3}
\end{tikzpicture}}\end{minipage}\quad
\begin{minipage}{.74\textwidth}The indices explain how cells are addressed on the board. Boxes are defined by drawing thicker boundaries.  Up to symmetry, all different 2-boxes and 3-boxes are used. There are four other types of 4-boxes: 
\scalebox{.2}{\begin{tikzpicture}[scale=0.75]
\draw [step =1, black] (0,0) grid (2,2);
\draw [line width=1.1mm, black] (0,0) -- (2,0);
\draw [line width=1.1mm, black] (2,0) -- (2,2);
\draw [line width=1.1mm, black] (0,0) -- (0,2);
\draw [line width=1.1mm, black] (0,2) -- (2,2);
\end{tikzpicture}}, \scalebox{.2}{\begin{tikzpicture}[scale=0.75]
\draw [step =1, black] (0,0) grid (3,2);
\draw [line width=1.1mm, black] (0,0) -- (3,0);
\draw [line width=1.1mm, black] (3,0) -- (3,1);
\draw [line width=1.1mm, black] (0,0) -- (0,1);
\draw [line width=1.1mm, black] (0,1) -- (1,1);
\draw [line width=1.1mm, black] (1,1) -- (1,2);
\draw [line width=1.1mm, black] (1,2) -- (2,2);
\draw [line width=1.1mm, black] (2,1) -- (2,2);
\draw [line width=1.1mm, black] (2,1) -- (3,1);
\end{tikzpicture}}, \scalebox{.2}{\begin{tikzpicture}[scale=0.75]
\draw [step =1, black] (0,0) grid (4,2);
\draw [line width=1.1mm, black] (0,0) -- (4,0);
\draw [line width=1.1mm, black] (4,0) -- (4,1);
\draw [line width=1.1mm, black] (0,0) -- (0,1);
\draw [line width=1.1mm, black] (0,1) -- (4,1);
\end{tikzpicture}}, \scalebox{.2}{\begin{tikzpicture}[scale=0.75]
		\draw [step =1, black] (0,0) grid (3,2);
		\draw [line width=1.1mm, black] (0,2) -- (2,2);
		\draw [line width=1.1mm, black] (2,1) -- (3,1);
		\draw [line width=1.1mm, black] (1,0) -- (3,0);
		\draw [line width=1.1mm, black] (0,1) -- (1,1);
		\draw [line width=1.1mm, black] (0,1) -- (0,2);
		\draw [line width=1.1mm, black] (2,1) -- (2,2);
		\draw [line width=1.1mm, black] (1,1) -- (1,0);
		\draw [line width=1.1mm, black] (3,1) -- (3,0);
\end{tikzpicture}}. A typical reasoning is: Consider cell $(3,1)$. We cannot leave the board, which excludes $\RA$. The preset 2-box excludes $\DA,\UA$. Hence, $\omega(3,1)=\LA$. Similarly,   $\omega(3,3)=\LA$, so  $\omega(2,3)=\DA$, etc.
\end{minipage}
\end{center}
\caption{Example of a $4 \times 4$ Roma game board, showing the main ingredients of a Roma puzzle and its presentation throughout this paper.}
\label{fig:RomaEx}
\end{figure}

\subsection*{A Derived Decision Problem}

An instance of \textsc{Roma} $\mathcal{R}$ consists of an $n \times n$ grid of cells $\mathcal{C}$ with $\mathcal{C} = \{c_{i,j} \mid i,j \in  [n]\}$. The preset entries of the instance are defined by a partial function $\rho \colon \mathcal{C} \to \{\circ, \RA, \LA, \UA, \DA\}$, where only one cell, the Roma-cell $c_{\mathcal{R}}\in\mathcal{C}$, can be assigned with $\circ$. We assume that $|\rho^{-1}(\circ)|=1$. 
This leaves a set $\mathcal{E}_{\mathcal{R}}$ of empty cells, for which $\rho$ is not defined. The boxes of an instance are given by a set $\mathcal{B}_{\mathcal{R}}$ 
and a total function $\beta \colon \mathcal{C} \to \mathcal{B}_{\mathcal{R}}$, where only up to four cells can be sorted into one box. For brevity, we call a box with $c$ cells a $c$-box. 
A cell can only be sorted into a non-empty box if it is a true neighbor of one of the cells already sorted into that box. A \emph{solution} to an instance is an assignment $\omega:\mathcal{C}\to  \{\circ, \RA, \LA, \UA, \DA\}$, which is a total function that coincides with $\rho$ whenever $\rho$ is defined and that is valid in the sense described next.

From an assignment $\omega$, we can derive a directed graph $G(\omega)=(V,E)$ as follows: $V=\mathcal{C}$. Let $c_{i,j},c_{\ell,k}\in V$.
There is a directed edge $(c_{i,j},c_{\ell,k})\in E$ if and only if one of the following four conditions is satisfied:
\begin{itemize}
\item $\ell=i$ and $k=j+1$ and $\omega(c_{i,j})=\UA$; or: 
$\ell=i$ and $k=j-1$ and $\omega(c_{i,j})=\DA$; or: 
\item $\ell=i+1$ and $k=j$ and $\omega(c_{i,j})=\RA$; or: 
$\ell=i-1$ and $k=j$ and $\omega(c_{i,j})=\LA$.
\end{itemize}

An assignment $\omega$ is called \emph{valid} if the following two conditions are met:
\begin{description}
\item[Box condition.] There is no box to which $\omega$ assigns the same arrow twice, or, more formally:
$$\forall c_{i,j},c_{\ell,k}\in \mathcal{C}:(c_{i,j}\neq c_{\ell,k}\land \beta(c_{i,j})=\beta(c_{\ell,k}))\implies \omega(c_{i,j})\neq\omega(c_{\ell,k})\,.$$
\item[Graph condition.]  $G(\omega)$ is acyclic, weakly connected and contains a unique vertex of out-degree zero, namely~$c_{\mathcal{R}}$.
\end{description}

In particular, the graph condition rules out assignments with $\omega(c_{0,0})=\LA$, because then $c_{0,0}$ has out-degree zero, or with 
$\omega(c_{0,0})=\UA$ and with $\omega(c_{0,1})=\DA$, because then the graph would be neither acyclic nor weakly connected. We next show that all paths lead to Rome if the formulated conditions are met.

\begin{lemma} \label{lem:all-paths-to-Rome}
From each vertex, there is a unique directed path to~$c_{\mathcal{R}}$.
\end{lemma}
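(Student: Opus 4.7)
The plan is to derive the lemma essentially from two combined observations: every non-Roma vertex has out-degree exactly one in $G(\omega)$, and the unique sink is $c_{\mathcal{R}}$. Once these are established, existence of a path is proved by iterating the out-edge, and uniqueness is automatic from determinism at each step.

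First I would verify the out-degree claim. The assignment $\omega$ is total into $\{\circ,\RA,\LA,\UA,\DA\}$, and the definition of $G(\omega)$ produces at most one outgoing edge per cell (the direction of its arrow, if the target cell exists on the board). The Roma-cell $c_{\mathcal{R}}$ has out-degree $0$ since its value $\circ$ matches none of the four edge rules. Any other cell $c\neq c_{\mathcal{R}}$ with out-degree $0$ would either be assigned $\circ$ (impossible, since the graph condition allows only one sink and $c_{\mathcal{R}}$ already occupies that role) or be assigned an arrow pointing outside the grid (again forbidden, since such $c$ would be a second vertex of out-degree zero). Hence every $c\neq c_{\mathcal{R}}$ has out-degree exactly $1$.

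Next I would construct the path. Starting from an arbitrary $v\in V$, define $v_0=v$ and, as long as $v_i\neq c_{\mathcal{R}}$, let $v_{i+1}$ be the unique out-neighbor of $v_i$. By acyclicity, the vertices $v_0,v_1,v_2,\dots$ are pairwise distinct, so because $|V|=n^2$ is finite, the sequence must terminate at some index $t$ with $v_t=c_{\mathcal{R}}$ (it cannot terminate at any other vertex, since all non-Roma vertices have out-degree $1$ by the previous step). This yields a directed path from $v$ to $c_{\mathcal{R}}$.

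Finally, for uniqueness I would argue that any directed path $v=u_0\to u_1\to\cdots\to u_s=c_{\mathcal{R}}$ must coincide with $v_0,v_1,\dots,v_t$: inductively, $u_0=v_0=v$, and whenever $u_i=v_i\neq c_{\mathcal{R}}$, the vertex $u_{i+1}$ is forced to be the unique out-neighbor of $u_i$, which is $v_{i+1}$. Hence the path obtained by following arrows is the only directed path from $v$ to $c_{\mathcal{R}}$. The only subtle point, and the one I would be most careful about, is the initial bookkeeping that rules out a non-Roma cell having out-degree zero; once that is pinned down, the rest is a short deterministic walk argument.
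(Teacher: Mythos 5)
Your proof is correct, but the existence half follows a genuinely different route from the paper's. The paper takes the weak-connectivity clause of the graph condition at face value: it picks an arbitrary undirected path $v_0,v_1,\dots,v_k=c_{\mathcal{R}}$ guaranteed by weak connectivity and then argues backwards from Rome --- since $c_{\mathcal{R}}$ has out-degree zero the last edge must point towards it, and since each earlier vertex has out-degree exactly one and its out-edge is already used, every edge on the path is forced to be oriented towards Rome, so the undirected path is in fact directed. You instead walk \emph{forwards} from $v$ along the unique out-edges and use acyclicity plus finiteness of the $n\times n$ board to conclude that this deterministic walk cannot revisit a vertex and hence must terminate, which it can only do at the unique sink $c_{\mathcal{R}}$. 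Both arguments are sound, and the uniqueness part (determinism from out-degree at most one) is the same in both. What your version buys is the observation that weak connectivity is not needed as a hypothesis at all: acyclicity, the unique-sink requirement, and the automatic out-degree bound already imply that every vertex reaches $c_{\mathcal{R}}$, and hence imply weak connectivity as a corollary --- so the graph condition is slightly redundant. What the paper's version buys is independence from any finiteness/counting argument (the connecting path is finite by definition) and a proof that stays entirely inside the stated graph condition. Your side remark pinning down why a non-Roma cell cannot have out-degree zero is more detailed than necessary --- the graph condition directly asserts that $c_{\mathcal{R}}$ is the only vertex of out-degree zero, so no case analysis on $\circ$ versus arrows pointing off the board is required --- but it is harmless.
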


\begin{proof}
As any assignment can define for any $c_{i,j}\in V$ at most one $c_{\ell,k}$ such that $(c_{i,j},c_{\ell,k})\in E$, each vertex 
has maximum out-degree of one. This already implies that, from each vertex, there exists at most one directed path to~$c_{\mathcal{R}}$.   The graph condition then tells us that indeed all vertices but the Roma-cell~$c_{\mathcal{R}}$ have out-degree exactly one. 
Let $v_0\in V$, $v_0\neq  c_{\mathcal{R}}$, be arbitrary. As $G(\omega)$ is weakly connected, there exists a sequence of vertices $v_0,v_1,v_2,\dots,v_k$ with $v_k=c_{\mathcal{R}}$ and, for each $i=1,\dots,k$, either $(v_{i-1},v_i)\in E$ or $(v_i,v_{i-1})\in E$ (but not both because of acyclicity). As $v_k$ has out-degree zero, $(v_{k-1},v_k)\in E$. As $v_{k-1}$ has out-degree one, $(v_{k-2},v_{k-1})$ is enforced.
This argument propagates inductively, so that finally $(v_{i-1},v_i)\in E$ for all  $i=1,\dots,k$ can be concluded, i.e., there exists a  directed path from $v_0$ to~$c_{\mathcal{R}}$. 
\end{proof}

\noindent
Given an instance $\mathcal{R}=(n,\rho,\beta)$ of \textsc{Roma}, the question is if there exists a solution or not.
If we want to explicitly mention the board dimensions, we speak of an $n\times n$-\textsc{Roma} puzzle.

Henceforth, we will refer to the information about an element of $\{\UA, \DA, \LA, \RA\}$ as a signal or flow, whereas ``signal'' describes an information which can be passed on to another cell by utilizing the rules of Roma (for example the special relationship between cells contained within the same box) and ``flow'' describes the path which is followed when moving one cell at a time in the direction the arrow contained in each given cell points in. Each flow needs to end in the Roma-cell in order for the assignment of an instance to be valid.

\section{Computational Hardness Results}

In this section, we present our first main result, which is the following one.
\begin{theorem}\label{thm:NPRoma}
\textsc{Roma} is \NP-complete.
\end{theorem}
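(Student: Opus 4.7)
Membership in \NP{} is immediate: given a candidate assignment $\omega$, we verify in polynomial time that it extends $\rho$, that the box condition holds, and that $G(\omega)$ is acyclic, weakly connected, and has a unique out-degree-zero vertex equal to $c_{\mathcal{R}}$. All of these are standard polynomial-time graph checks on an $n^{2}$-vertex graph. For \NP-hardness, the plan is to reduce from \textsc{Planar 3SAT}~\cite{DBLP:journals/siamcomp/Lichtenstein82}. Given a planar instance $\varphi$, I would first fix a planar rectilinear embedding of its variable–clause incidence graph, then replace each combinatorial component by a Roma gadget drawn on a constant-factor scaled grid, placing the Roma-cell $c_{\mathcal{R}}$ at one designated position and filling all cells outside of gadgets with preset arrows and boxes that leave no further freedom of assignment.

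The truth value of a variable $x$ is encoded by the circulation direction of arrows along a closed rectangular ``variable loop'': the box constraints around the loop are chosen so that the only two valid completions are an entirely clockwise cycle (read as \emph{true}) and an entirely counter-clockwise cycle (read as \emph{false}). A literal gadget taps into this loop at a specific cell, emitting one of two possible outgoing signals depending on the orientation. Edge gadgets are corridors of boxes that propagate a signal unchanged through the grid, a fan-out gadget duplicates a literal signal when the same literal feeds several clauses, and a clause gadget collects three incoming literal signals and admits a valid local Roma completion if and only if at least one of them reports ``satisfied''; otherwise every completion of the clause interior would either create a directed cycle, produce a second out-degree-zero cell, or force two identical arrows into a common box. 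Although planarity of $\varphi$ obviates the need for a crossover in principle, a crossover gadget is still convenient to align the relative positions of variable loops and clause gadgets on the rigid grid.

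Correctness then splits in the usual way. From a satisfying assignment $\alpha$ one orients each variable loop accordingly, propagates signals through the edge and fan-out gadgets, and completes each clause gadget using a satisfied literal input; Lemma~\ref{lem:all-paths-to-Rome} then guarantees that $\omega$ is valid provided the underlying graph is a spanning in-arborescence rooted at $c_{\mathcal{R}}$. Conversely, any valid $\omega$ fixes the orientation of every variable loop and thus induces an assignment $\alpha$, and the finite-case analysis of each clause gadget forces at least one of its literals to be \emph{true}. The main obstacle, I expect, will be \emph{global} consistency of the graph condition: box constraints only pin down arrows locally, whereas acyclicity, weak connectivity, and sink uniqueness are global properties of $G(\omega)$. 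I would address this by designing each gadget together with a structural invariant stating that, outside of variable loops and clause interiors, flows are deterministically routed along a fixed spanning tree toward $c_{\mathcal{R}}$, so that any cycle, extra sink, or connectivity failure must lie inside a single gadget and can be ruled out by a bounded case check. Since the gadgets are of constant size and the rectilinear embedding uses area $\Oh(|\varphi|^{2})$, the reduction runs in polynomial time.
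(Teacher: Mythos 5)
Your \NP-membership argument and the overall architecture (reduction from \textsc{Planar 3SAT}, constant-size gadgets on a rectilinear embedding, preset filler cells outside the gadgets, clause gadgets that fail by creating a cycle, an extra sink, or a box violation) match the paper's strategy. However, there is a genuine flaw in your central gadget: you encode the truth value of a variable as the circulation direction of a \emph{closed directed loop}, claiming the only two valid completions are an entirely clockwise and an entirely counter-clockwise cycle. No valid Roma assignment can contain a directed cycle at all: the graph condition requires $G(\omega)$ to be acyclic with $c_{\mathcal{R}}$ as the unique sink, so every cell's flow must terminate at the Roma-cell (Lemma~\ref{lem:all-paths-to-Rome}). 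A variable gadget whose only two completions are directed cycles therefore admits \emph{no} valid completion, and the reduction collapses. This is not a presentational slip but a conflict with the defining constraint of the game; indeed, the paper's gadget arguments repeatedly use the impossibility of closed cycles as the forcing mechanism (e.g., the corner-gadget lemma), rather than as an admissible state.

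The repair is essentially what the paper does: a variable is represented by a gadget built from fanout- and corner-gadgets that has exactly two valid completions, both of which are \emph{acyclic} flow patterns draining into a shared ``core-line'' of preset 1-boxes that leads to the Roma-cell; literal gadgets tap into these patterns and either let their local flow escape downward toward Rome (literal satisfied) or funnel it back into the clause's ring of 1-boxes, so that an unsatisfied clause forces a directed cycle. Your clause-side reasoning is compatible with this, but your variable encoding must be replaced by a two-state acyclic gadget, and the propagation lemmas (each gadget's completion is forced once one cell is set) must be proved for that gadget, since your ``structural invariant'' about flows outside the gadgets does not by itself rescue a gadget whose intended internal states violate acyclicity.
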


As each cell is filled by one out of four possible directions, a solution can be described with at most $2n^2$ bits for a concrete Roma puzzle with $n \times n$ cells. Hence,  \textsc{Roma} is in \NP.
The main part of the section is devoted to a proof sketch for its \NP-hardness.
At the end, we derive several other conclusions from the specific nature of the reduction.

Recall that Roma is played on an $n\times n$ board.
The \NP-hardness of the \textsc{Roma} puzzle is proven by a reduction from \textsc{Planar 3SAT}.
According to Lichtenstein~\cite{DBLP:journals/siamcomp/Lichtenstein82}, a 3-SAT formula $\varphi$ is \emph{planar} if the
graph $G(\varphi)=(V,E)$ admits an embedding in the plane, where $V=X\cup C$, with $X$ being the variables of $\varphi$ and $C$ being its clauses, and $E$ contains two types of edges: (a) incidence edges: $xc\in E$ if variable~$x$ occurs (either positive or negated) in clause~$c$, (b) cycle edges: $G[X]$ is a cycle. Instances of \textsc{Planar 3SAT} are planar 3-SAT formulas. It is worth mentioning that in Lichtenstein's construction, the graph $G(\varphi)$ is of bounded degree as can be observed in~\autoref{fig:L-crossing}, where the crossover-gadget of the construction is depicted.

We will show how to construct a \textsc{Roma} puzzle $R(\varphi)$ from a given planar 3-SAT formula~$\varphi$ in polynomial time.
This implicitly assumes a planar embedding of the graph $G(\varphi)=(V,E)$.
The edges of $G(\varphi)$ require the distribution of a  signal between the variable and clause gadgets that we describe below.
As we have to model these edges in a discretized fashion in $R(\varphi)$, we need a simple gadget that allows to turn signals by 90 degrees.
By describing this gadget, it should also be made clear what \emph{signal} means in our construction. Note that gadgets described in the following will be embedded in a bigger Roma board, hence arrows can leave the gadget. We will later take care of those arrows by leading their flow to the Roma-cell.

\vspace{5pt}
\noindent
First, we create an L-shaped 4-box. In the context of this explanation, we set the lower left cell of this box as $c_{0,0}$ with $\rho(c_{0,0}) = \UA$ and $\rho(c_{0,1}) =\LA$. This\linebreak[4]
\begin{minipage}{.8\textwidth} means that $\beta(c_{0,0}) = \beta(c_{0,1}) = \beta(c_{0,2}) = \beta(c_{1,0}) = b_{0}$. Then, we create three additional boxes with $\beta(c_{2,0}) = b_{1}$; $\beta(c_{1,2}) = b_{2}$; $\beta(c_{0,3}) = \beta(c_{1,3}) = b_{3}$. Note that we may extend $b_{1}$ and $b_{3}$ to contain additional cells if needed. Lastly, we set $\rho(c_{1,3}) =\LA$ and $\rho(c_{1,2}) = \UA$. This gives the construction called \emph{corner-gadget}, depicted on the right. The cells showing their indices are still to be set during a play. Note, that this gadget will be utilized as part of the variable-gadget.
\end{minipage}
\begin{minipage}{.2\textwidth}\scalebox{.6}{\begin{tikzpicture}[scale=0.79]
\draw [step =1, black] (-1,-1) grid (4,5);

\draw [line width=0.9mm, black] (0,0) -- (3,0);
\draw [line width=0.9mm, black] (1,1) -- (3,1);
\draw [line width=0.9mm, black] (1,2) -- (2,2);
\draw [line width=0.9mm, black] (0,3) -- (2,3);
\draw [line width=0.9mm, black] (0,4) -- (2,4);
\draw [line width=0.9mm, black] (0,0) -- (0,4);
\draw [line width=0.9mm, black] (1,1) -- (1,3);
\draw [line width=0.9mm, black] (2,0) -- (2,1);
\draw [line width=0.9mm, black] (2,2) -- (2,4);
\draw [line width=0.9mm, black] (3,0) -- (3,1);

\la{0}{1}
\la{1}{3}

\ua{0}{0}
\ua{1}{2}

\ix{1}{0}
\ix{2}{0}
\ix{0}{2}
\ix{0}{3}
\end{tikzpicture}}
\end{minipage}

\vspace{5pt}
Now and in the following, assume that $\omega$ is an assignment that resolves the \textsc{Roma} puzzle, where this gadget is a piece of. We can show the following claim:
\begin{lemma}
	\label{lem:corner-gadget}
$(\omega(c_{2,0}) =\LA)$ implies $(\omega(c_{0,3}) = \UA)$ and $(\omega(c_{0,3}) = \DA)$ implies $(\omega(c_{1,0}) =\RA)$.
\end{lemma}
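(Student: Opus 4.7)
The plan is to analyze the four free cells of the gadget, namely $c_{1,0}$, $c_{2,0}$, $c_{0,2}$, and $c_{0,3}$, using only two tools: the box condition applied to the L-shaped box $b_0=\{c_{0,0},c_{0,1},c_{0,2},c_{1,0}\}$ and to $b_3 \supseteq \{c_{0,3},c_{1,3}\}$, together with the acyclicity requirement from the graph condition. Because $\rho(c_{0,0})=\UA$ and $\rho(c_{0,1})=\LA$ already consume two of the four directions in $b_0$, the pair $(\omega(c_{1,0}),\omega(c_{0,2}))$ must be a permutation of $(\RA,\DA)$; likewise $\omega(c_{0,3})\neq \LA$ because $\rho(c_{1,3})=\LA$ already sits in $b_3$. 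I will then exploit the short preset arrow chain $c_{0,2}\xrightarrow{\RA}c_{1,2}\xrightarrow{\UA}c_{1,3}\xrightarrow{\LA}c_{0,3}$ (which is forced whenever $\omega(c_{0,2})=\RA$) to detect would-be cycles.

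For the first implication, I would assume $\omega(c_{2,0})=\LA$, so that $c_{2,0}\to c_{1,0}$. If $\omega(c_{1,0})=\RA$ this immediately produces the 2-cycle $c_{2,0}\leftrightarrow c_{1,0}$, contradicting acyclicity; hence $\omega(c_{1,0})=\DA$, which via the box constraint on $b_0$ forces $\omega(c_{0,2})=\RA$. The chain through $c_{1,2}$ and $c_{1,3}$ then deposits the flow at $c_{0,3}$. I would rule out the three remaining candidates for $\omega(c_{0,3})$: the value $\LA$ violates the box condition in $b_3$; the value $\RA$ closes a 2-cycle $c_{0,3}\leftrightarrow c_{1,3}$; and the value $\DA$ closes the 4-cycle $c_{0,3}\to c_{0,2}\to c_{1,2}\to c_{1,3}\to c_{0,3}$. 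The only remaining option is $\omega(c_{0,3})=\UA$, as required.

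For the second implication I would argue analogously: assuming $\omega(c_{0,3})=\DA$, the flow enters $c_{0,2}$. If $\omega(c_{0,2})=\RA$, then together with the preset chain $c_{0,2}\to c_{1,2}\to c_{1,3}\to c_{0,3}\to c_{0,2}$ we again obtain a 4-cycle, so acyclicity forces $\omega(c_{0,2})=\DA$; the box condition on $b_0$ then leaves $\omega(c_{1,0})=\RA$ as the only possibility. There is essentially no obstacle beyond making sure the case analysis is complete; the only subtlety worth flagging is that, since the gadget is embedded in a larger board, I must avoid relying on what lies outside the depicted cells and instead derive every contradiction purely from the forced internal cycles and from the box conditions on $b_0$ and $b_3$, both of which are intrinsic to the gadget.
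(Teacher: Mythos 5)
Your proposal is correct and follows essentially the same route as the paper's proof: both derive the chains $(\omega(c_{2,0})=\LA)\rightarrow(\omega(c_{1,0})=\DA)\rightarrow(\omega(c_{0,2})=\RA)\rightarrow(\omega(c_{0,3})=\UA)$ and $(\omega(c_{0,3})=\DA)\rightarrow(\omega(c_{0,2})=\DA)\rightarrow(\omega(c_{1,0})=\RA)$ from cycle-avoidance together with the box conditions on $b_0$ and $b_3$. You merely spell out the case analysis (the forced permutation of $\RA,\DA$ in $b_0$ and the explicit $2$- and $4$-cycles) that the paper leaves implicit, which is a welcome elaboration rather than a different argument.
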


Hence, a path entering the gadget from the right lower end will cause an upward direction at the upper left end, while a downward path at the upper left end will cause a right direction to be taken at the right lower end. We will show-case the reasoning with this lemma, but rather only state claims below.

\begin{proof}
We need to prevent closed cycles, since otherwise there would be cells, the flow of which would not reach the Roma-cell. This 
leads to:\\
$(\omega(c_{2,0}) =\LA) \rightarrow (\omega(c_{1,0}) = \DA) \rightarrow (\omega(c_{0,2}) =\RA) \rightarrow (\omega(c_{0,3}) = \UA)$
and\\
$(\omega(c_{0,3}) = \DA) \rightarrow (\omega(c_{0,2}) = \DA) \rightarrow (\omega(c_{1,0}) =\RA)$. Further, $\omega(c_{2,0}) \ne\LA$. 
\end{proof}

We also need to move a signal along in one direction. This can be done with a \emph{straight-line gadget}, described next.\\[3pt]We create a 4-box, all cells of which are located in the same row. In the  context of this explanation, we set the leftmost cell of this box as $c_{0,0}$. This means that\linebreak[4]
\begin{minipage}{.75\textwidth}
 $\beta(c_{0,0}) = \beta(c_{1,0}) = \beta(c_{2,0}) = \beta(c_{3,0}) = b_{0}$. We set $\omega(c_{1,0}) =\LA$ and $\omega(c_{2,0}) =\RA$. A box such as this one will henceforth be referred to as a \emph{conductor-box}. In order to complete our gadget, we place two additional conductor-boxes on top of the first one. The cells showing their indices are still to be set during a play. 
\end{minipage}
\begin{minipage}{.25\textwidth}\scalebox{.6}{\begin{tikzpicture}[scale=0.79]

\draw [step =1, black] (-1,-1) grid (5,4);

\draw [line width=0.9mm, black] (0,0) -- (4,0);
\draw [line width=0.9mm, black] (0,1) -- (4,1);
\draw [line width=0.9mm, black] (0,2) -- (4,2);
\draw [line width=0.9mm, black] (0,3) -- (4,3);
\draw [line width=0.9mm, black] (0,0) -- (0,3);
\draw [line width=0.9mm, black] (4,0) -- (4,3);

\ra{2}{0}
\ra{2}{1}
\ra{2}{2}

\la{1}{0}
\la{1}{1}
\la{1}{2}

\ix{0}{0}
\ix{0}{1}
\ix{0}{2}
\ix{3}{0}
\ix{3}{1}
\ix{3}{2}
\end{tikzpicture}}
\end{minipage}

\begin{lemma}\label{lem:straight-line gadget}
If any cell in a straight-line gadget is validly assigned, there is only one valid way to assign the other empty cells.
\end{lemma}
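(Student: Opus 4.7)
The plan is to chain together the two validity conditions—box and acyclicity—until the empty cells of the straight-line gadget collapse to only two possible configurations.

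First, I would apply the box condition to each of the three horizontal conductor-boxes. Because the preset arrows already fill the $\LA$ and $\RA$ slots in every row, and a $4$-box can hold at most one arrow of each direction, the two endpoint cells $c_{0,j}$ and $c_{3,j}$ are forced to receive distinct arrows from $\{\UA,\DA\}$. So in every row the two endpoints carry opposite vertical arrows; in particular, knowing $\omega(c_{0,j})$ determines $\omega(c_{3,j})$ and vice versa.

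Next, I would use the acyclicity part of the graph condition on the vertical columns $(c_{0,0},c_{0,1},c_{0,2})$ and $(c_{3,0},c_{3,1},c_{3,2})$. If two vertically adjacent cells in the same column point toward each other, i.e., if some $\UA$ sits directly below a $\DA$, the two resulting edges form a $2$-cycle and violate the graph condition. Hence neither column admits a $\UA$ immediately below a $\DA$. The row-wise coupling from the first step now finishes the job: column~$3$ is the cell-by-cell complement of column~$0$, so forbidding $\UA$-below-$\DA$ in column~$3$ is equivalent to forbidding $\DA$-below-$\UA$ in column~$0$. Combined with the restriction already obtained directly for column~$0$, no vertical transition between $\UA$ and $\DA$ is allowed at all, so column~$0$ must be constant.

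Thus column~$0$ is either $(\UA,\UA,\UA)$ or $(\DA,\DA,\DA)$, and column~$3$ is the complementary constant sequence. Once any single empty cell is fixed, the constant value for its column is chosen, and the first step propagates it to the other column. The main subtlety—and where a naive attempt would stall—is noticing that it is not enough to analyze one column in isolation (this leaves several valid patterns of transitions); one has to exploit the box-induced coupling between columns~$0$ and~$3$ so that the acyclicity constraints of the two columns reinforce each other and rule out every mixed assignment.
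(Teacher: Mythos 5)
Your proof is correct and uses essentially the same ingredients as the paper's: the box condition forces the two endpoint cells of each conductor-box to carry opposite vertical arrows, and forbidding mutually pointing vertical neighbors (2-cycles) then forces each column of endpoints to be constant. The paper presents this as two explicit cell-by-cell implication chains (up one column, back down the other), whereas you phrase it as a structural argument about forbidden transitions in the two complementary columns, but the content is the same.
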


\begin{proof}
Assigning any cell within this gadget leads, by utilizing the rules of Roma, to the following implications. W.l.o.g., we start by assigning $c_{0,0}$: 
$(\omega(c_{0,0}) = \DA) \rightarrow (\omega(c_{3,0}) = \UA) \rightarrow (\omega(c_{3,1}) = \UA) \rightarrow (\omega(c_{3,2}) = \UA) \rightarrow (\omega(c_{0,2}) = \DA) \rightarrow (\omega(c_{0,1}) = \DA) \rightarrow (\omega(c_{0,0}) = \DA)$ and
$(\omega(c_{0,0}) = \UA) \rightarrow (\omega(c_{0,1}) = \UA) \rightarrow (\omega(c_{0,2}) = \UA) \rightarrow (\omega(c_{3,2}) = \DA) \rightarrow (\omega(c_{3,1}) = \DA) \rightarrow (\omega(c_{3,0}) = \DA) \rightarrow (\omega(c_{0,0}) = \UA)$.
\end{proof}

By adding more conductor-boxes, straight lines of arbitrary length can be built.
We are now coming to the vertices of $G(\varphi)$. As they also have higher degree (although the degree can be always assumed to be bounded),
we need some gadgets to fan signals out. We  need to be able to fan a signal out in order to deliver \linebreak[4]
\begin{minipage}{.52\textwidth}
  an information regarding a variable to multiple gadgets representing clauses containing the said variable. This is done utilizing the  \emph{fanout-gadget} displayed on the right. It is based on the idea of straight-line gadgets, so that we  conclude:
\end{minipage}
\begin{minipage}{.48\textwidth}\scalebox{.6}{\begin{tikzpicture}[scale=0.79]

\draw [step =1, black] (-1,-1) grid (11,4);

\draw [line width=0.9mm, black] (0,0) -- (4,0);
\draw [line width=0.9mm, black] (6,0) -- (10,0);
\draw [line width=0.9mm, black] (0,1) -- (10,1);
\draw [line width=0.9mm, black] (0,2) -- (10,2);
\draw [line width=0.9mm, black] (0,3) -- (4,3);
\draw [line width=0.9mm, black] (6,3) -- (10,3);
\draw [line width=0.9mm, black] (0,0) -- (0,3);
\draw [line width=0.9mm, black] (3,1) -- (3,2);
\draw [line width=0.9mm, black] (4,0) -- (4,1);
\draw [line width=0.9mm, black] (4,2) -- (4,3);
\draw [line width=0.9mm, black] (6,0) -- (6,1);
\draw [line width=0.9mm, black] (6,2) -- (6,3);
\draw [line width=0.9mm, black] (7,1) -- (7,2);
\draw [line width=0.9mm, black] (10,0) -- (10,3);

\ra{2}{0}
\ra{2}{1}
\ra{2}{2}
\ra{5}{1}
\ra{8}{0}
\ra{8}{1}
\ra{8}{2}

\la{1}{0}
\la{1}{1}
\la{1}{2}
\la{4}{1}
\la{7}{0}
\la{7}{1}
\la{7}{2}

\ix{0}{0}
\ix{0}{1}
\ix{0}{2}
\ix{3}{0}
\ix{3}{1}
\ix{3}{2}
\ix{6}{0}
\ix{6}{1}
\ix{6}{2}
\ix{9}{0}
\ix{9}{1}
\ix{9}{2}
\end{tikzpicture}}
\end{minipage}

\begin{lemma}\label{lem:fanout}
If any empty cell in a fanout-gadget is assigned, there is only one valid way to assign the other empty cells.
\end{lemma}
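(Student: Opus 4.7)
The plan is to mimic the proof of Lemma~\ref{lem:straight-line gadget}: trace the forced implications given by the box condition (each box has at most one arrow per direction) and cycle avoidance (the graph condition forbids two-cycles between vertically adjacent cells).

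First I would observe that every empty cell of the gadget lies in a box in which $\LA$ and $\RA$ are already preset, so the box condition forces it to be $\UA$ or $\DA$. Moreover, in each of the five $4$-boxes whose preset arrows are $\LA$ and $\RA$---namely the four horizontal boxes spanning rows $0$ and $2$ ($c_{0,0}$--$c_{3,0}$, $c_{0,2}$--$c_{3,2}$, $c_{6,0}$--$c_{9,0}$, $c_{6,2}$--$c_{9,2}$), together with the middle box $c_{3,1}$--$c_{6,1}$ of row $1$---the two remaining cells must be assigned one $\UA$ and one $\DA$. This yields five pairing constraints among the $12$ empty cells.

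Next I would apply cycle avoidance along each column $x\in\{0,3,6,9\}$: two vertically adjacent cells with the lower assigned $\UA$ and the upper assigned $\DA$ produce a two-cycle, which the graph condition forbids. Fixing any empty cell then triggers a cascade. Concretely, starting from $\omega(c_{0,0})=\UA$, column-$0$ cycle avoidance forces $\omega(c_{0,1})=\UA$ and then $\omega(c_{0,2})=\UA$; the row-$0$ and row-$2$ pairings give $\omega(c_{3,0})=\omega(c_{3,2})=\DA$; column-$3$ cycle avoidance (from $\omega(c_{3,2})=\DA$) forces $\omega(c_{3,1})=\DA$; the middle $4$-box pairing gives $\omega(c_{6,1})=\UA$; column-$6$ cycle avoidance forces $\omega(c_{6,2})=\UA$; row-$2$ pairing gives $\omega(c_{9,2})=\DA$; column-$9$ cycle avoidance forces $\omega(c_{9,1})=\DA$ and then $\omega(c_{9,0})=\DA$; finally, row-$0$ pairing yields $\omega(c_{6,0})=\UA$. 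The alternative starting choice $\omega(c_{0,0})=\DA$ propagates symmetrically to the mirror assignment, and an analogous chain can be initiated from any other empty cell.

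The main obstacle I anticipate is that column-uniformity does not follow from column-internal cycle avoidance alone: a pattern such as $\DA,\DA,\UA$ (reading a column from bottom to top) is locally consistent, so ruling it out requires invoking the pairing with the opposite column and then propagating cycle constraints on that column back. In other words, one often has to propagate ``around'' the gadget---downward in one column, across via a horizontal $4$-box, and upward in the partner column---to force every cell. Once this is handled, the case analysis starting from any of the twelve empty cells reduces to a routine chain of implications analogous to Lemma~\ref{lem:straight-line gadget}, and it produces exactly two valid global assignments (mirror images of each other), establishing that assigning any single empty cell determines all the rest uniquely.
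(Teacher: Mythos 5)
Your proof is correct and follows essentially the same route the paper takes: the paper gives no separate argument for Lemma~\ref{lem:fanout}, merely noting it follows from the straight-line-gadget reasoning, and your chain of forced implications (box condition restricting the twelve empty cells to $\UA$/$\DA$ with a pairing in each of the five two-empty-cell boxes, plus two-cycle avoidance along the columns, propagating around the gadget) is exactly that reasoning spelled out in detail. Your closing observation that the gadget admits precisely two global assignments differing in every empty cell cleanly yields the stated uniqueness.
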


Note that an arbitrary number of variations of fanout-gagdets can be connected and utilized in order to transmit a signal horizontally:

\begin{minipage}{.48\textwidth}\scalebox{.55}{\begin{tikzpicture}[scale=0.79]

\draw [step =1, black] (-1,-0) grid (23,3);

\draw [line width=0.9mm, black] (0,0) -- (4,0);
\draw [line width=0.9mm, black] (6,0) -- (10,0);
\draw [line width=0.9mm, black] (12,0) -- (16,0);
\draw [line width=0.9mm, black] (18,0) -- (22,0);
\draw [line width=0.9mm, black] (0,1) -- (22,1);
\draw [line width=0.9mm, black] (0,2) -- (22,2);
\draw [line width=0.9mm, black] (0,3) -- (4,3);
\draw [line width=0.9mm, black] (6,3) -- (10,3);
\draw [line width=0.9mm, black] (12,3) -- (16,3);
\draw [line width=0.9mm, black] (18,3) -- (22,3);
\draw [line width=0.9mm, black] (0,0) -- (0,3);
\draw [line width=0.9mm, black] (3,1) -- (3,2);
\draw [line width=0.9mm, black] (4,0) -- (4,1);
\draw [line width=0.9mm, black] (4,2) -- (4,3);
\draw [line width=0.9mm, black] (6,0) -- (6,1);
\draw [line width=0.9mm, black] (6,2) -- (6,3);
\draw [line width=0.9mm, black] (7,1) -- (7,2);
\draw [line width=0.9mm, black] (9,1) -- (9,2);
\draw [line width=0.9mm, black] (10,0) -- (10,1);
\draw [line width=0.9mm, black] (10,2) -- (10,3);
\draw [line width=0.9mm, black] (12,0) -- (12,1);
\draw [line width=0.9mm, black] (12,2) -- (12,3);
\draw [line width=0.9mm, black] (13,1) -- (13,2);
\draw [line width=0.9mm, black] (15,1) -- (15,2);
\draw [line width=0.9mm, black] (16,0) -- (16,1);
\draw [line width=0.9mm, black] (16,2) -- (16,3);
\draw [line width=0.9mm, black] (18,0) -- (18,1);
\draw [line width=0.9mm, black] (18,2) -- (18,3);
\draw [line width=0.9mm, black] (19,1) -- (19,2);
\draw [line width=0.9mm, black] (21,1) -- (21,2);
\draw [line width=0.9mm, black] (22,0) -- (22,3);

\ra{2}{0}
\ra{2}{1}
\ra{2}{2}
\ra{5}{1}
\ra{8}{0}
\ra{8}{1}
\ra{8}{2}
\ra{11}{1}
\ra{14}{0}
\ra{14}{1}
\ra{14}{2}
\ra{17}{1}
\ra{20}{0}
\ra{20}{1}
\ra{20}{2}

\la{1}{0}
\la{1}{1}
\la{1}{2}
\la{4}{1}
\la{7}{0}
\la{7}{1}
\la{7}{2}
\la{10}{1}
\la{13}{0}
\la{13}{1}
\la{13}{2}
\la{16}{1}
\la{19}{0}
\la{19}{1}
\la{19}{2}

\end{tikzpicture}}
\end{minipage}
\\

So far, we mainly constructed geometric gadgets, but these are the building blocks of the proper formula gadgets that we describe next.
Let us mention one more geometric detail: In $G(\varphi)$, all vertices have been connected via a cycle. In our construction, we actually only need a connection via a kind of path which we refert to as the \emph{core-line} in the following.
Next, we describe the logical gadgets, which are gadgets for setting variables, literals and clauses.

The variable-gadget is described in \autoref{fig:vgscb}. The picture contains quite a number of preset 1-box cells in the middle, most of which are not necessary for the gadgetry itself. Its sole purpose is to form a proper Roma puzzle and to guarantee that there is only one possible way to solve the Roma puzzle in case the given Boolean formula was uniquely satisfiable. 
The essential preset cells for the variable-gadget are shown in \autoref{fig:vgf}. This gives some empty space, but there is clearly far more empty space to be filled between the gadgets, when we assemble the whole construction of a given Boolean formula. 
How to fill this space is explained in more details at the end of this section.


\begin{figure}[tb]
We start with a fanout-gadget at the bottom. Next, the lower left part of this gadget will be connected to a modified corner-gadget, which does not\linebreak[4]
\begin{minipage}{.47\textwidth}  have the lower right 1-box as shown above, and the top 2-box will be part of the lower left 4-box of the fanout-gadget instead. We now connect a corner-gadget, modified in a slightly different manner, to the lower right part. Lastly, we copy this construction, rotate it by $180^\circ$ and connect the two by connecting the loose ends of the corner-gadgets, which allows for additional 1-boxes to create a flow from the right end of the gadget to its left end as part of the core-line, highlighted in light red.
\end{minipage}\ \begin{minipage}{.52\textwidth}
\scalebox{.45}{
\begin{tikzpicture}[scale=0.94]
\draw [step =1, black] (-1,-1) grid (14,12);
\draw [step =1, line width=1.5mm, black] (4,3) grid (9,6);
\draw [step =1, line width=1.5mm, black] (9,3) grid (11,5);
\draw [step =1, line width=1.5mm, black] (2,7) grid (9,8);

\draw [line width=1.5mm, black] (8,2) -- (8,3);
\draw [line width=1.5mm, black] (5,8) -- (5,9);

\filldraw[fill=red!20 ](0,6) rectangle (13,7);
\draw [line width=1.5mm, black] (3,0) -- (7,0);
\draw [line width=1.5mm, black] (9,0) -- (13,0);
\draw [line width=1.5mm, black] (3,1) -- (13,1);
\draw [line width=1.5mm, black] (2,2) -- (13,2);
\draw [line width=1.5mm, black] (3,3) -- (7,3);
\draw [line width=1.5mm, black] (9,3) -- (13,3);
\draw [line width=1.5mm, black] (2,4) -- (3,4);
\draw [line width=1.5mm, black] (11,4) -- (12,4);
\draw [line width=1.5mm, black] (0,5) -- (3,5);
\draw [line width=1.5mm, black] (9,5) -- (12,5);
\draw [line width=1.5mm, black] (1,6) -- (9,6);
\draw [line width=1.5mm, black] (10,6) -- (13,6);
\draw [line width=1.5mm, black] (1,7) -- (9,7);
\draw [line width=1.5mm, black] (10,7) -- (13,7);
\draw [line width=1.5mm, black] (0,8) -- (4,8);
\draw [line width=1.5mm, black] (6,8) -- (10,8);
\draw [line width=1.5mm, black] (0,9) -- (11,9);
\draw [line width=1.5mm, black] (0,10) -- (10,10);
\draw [line width=1.5mm, black] (0,11) -- (4,11);
\draw [line width=1.5mm, black] (6,11) -- (10,11);
\draw [line width=1.5mm, black] (0,5) -- (0,11);
\draw [line width=1.5mm, black] (1,6) -- (1,8);
\draw [line width=1.5mm, black] (2,2) -- (2,4);
\draw [line width=1.5mm, black] (2,5) -- (2,8);
\draw [line width=1.5mm, black] (3,0) -- (3,5);
\draw [line width=1.5mm, black] (3,6) -- (3,7);
\draw [line width=1.5mm, black] (3,9) -- (3,10);
\draw [line width=1.5mm, black] (4,3) -- (4,7);
\draw [line width=1.5mm, black] (4,8) -- (4,9);
\draw [line width=1.5mm, black] (4,10) -- (4,11);
\draw [line width=1.5mm, black] (5,6) -- (5,7);
\draw [line width=1.5mm, black] (6,1) -- (6,2);
\draw [line width=1.5mm, black] (6,6) -- (6,7);
\draw [line width=1.5mm, black] (6,8) -- (6,9);
\draw [line width=1.5mm, black] (6,10) -- (6,11);
\draw [line width=1.5mm, black] (7,0) -- (7,1);
\draw [line width=1.5mm, black] (7,2) -- (7,3);
\draw [line width=1.5mm, black] (7,6) -- (7,7);
\draw [line width=1.5mm, black] (7,9) -- (7,10);
\draw [line width=1.5mm, black] (8,6) -- (8,7);
\draw [line width=1.5mm, black] (9,0) -- (9,1);
\draw [line width=1.5mm, black] (9,2) -- (9,3);
\draw [line width=1.5mm, black] (9,5) -- (9,8);
\draw [line width=1.5mm, black] (10,1) -- (10,2);
\draw [line width=1.5mm, black] (10,6) -- (10,11);
\draw [line width=1.5mm, black] (11,3) -- (11,4);
\draw [line width=1.5mm, black] (11,5) -- (11,9);
\draw [line width=1.5mm, black] (12,3) -- (12,5);
\draw [line width=1.5mm, black] (12,6) -- (12,7);
\draw [line width=1.5mm, black] (13,0) -- (13,7);

\ra{5}{0}
\ra{11}{0}
\ra{5}{1}
\ra{8}{1}
\ra{11}{1}
\ra{2}{2}
\ra{5}{2}
\ra{11}{2}
\ra{3}{4}
\ra{12}{4}
\ra{2}{8}
\ra{8}{8}
\ra{2}{9}
\ra{5}{9}
\ra{8}{9}
\ra{2}{10}
\ra{8}{10}

\la{4}{0}
\la{10}{0}
\la{4}{1}
\la{7}{1}
\la{10}{1}
\la{4}{2}
\la{10}{2}
\la{0}{6}
\la{1}{6}
\la{2}{6}
\la{3}{6}
\la{4}{6}
\la{5}{6}
\la{6}{6}
\la{7}{6}
\la{8}{6}
\la{9}{6}
\la{10}{6}
\la{11}{6}
\la{12}{6}
\la{1}{8}
\la{7}{8}
\la{10}{8}
\la{1}{9}
\la{4}{9}
\la{7}{9}
\la{1}{10}
\la{7}{10}

\la{9}{3}
\la{9}{4}
\la{10}{3}
\la{10}{4}
\la{11}{4}

\ua{0}{5}
\ua{9}{5}
\ua{1}{7}
\ua{10}{7}

\ua{4}{5}
\ua{5}{5}
\ua{6}{5}
\ua{7}{5}
\ua{8}{5}
\ua{4}{4}
\ua{5}{4}
\ua{6}{4}
\ua{7}{4}
\ua{8}{4}
\ua{4}{3}
\ua{5}{3}
\ua{6}{3}
\ua{7}{3}
\ua{8}{3}
\ua{7}{2}
\ua{8}{2}

\da{2}{3}
\da{11}{3}
\da{3}{5}
\da{12}{5}

\da{2}{7}
\da{3}{7}
\da{4}{7}
\da{5}{7}
\da{6}{7}
\da{7}{7}
\da{8}{7}
\da{4}{8}
\da{5}{8}

\ix{0}{0}
\ix{3}{0}
\ix{6}{0}
\ix{9}{0}
\ix{12}{0}
\ix{3}{1}
\ix{6}{1}
\ix{9}{1}
\ix{12}{1}
\ix{3}{2}
\ix{6}{2}
\ix{9}{2}
\ix{12}{2}
\ix{3}{3}
\ix{12}{3}
\ix{1}{5}
\ix{2}{5}
\ix{10}{5}
\ix{11}{5}
\ix{0}{7}
\ix{9}{7}
\ix{0}{8}
\ix{3}{8}
\ix{6}{8}
\ix{9}{8}
\ix{0}{9}
\ix{3}{9}
\ix{6}{9}
\ix{9}{9}
\ix{0}{10}
\ix{3}{10}
\ix{6}{10}
\ix{9}{10}
\end{tikzpicture}}
\end{minipage}
\caption{Variable-gadget with 1-boxes in the middle to form the core-line.}
\label{fig:vgscb}
\end{figure}
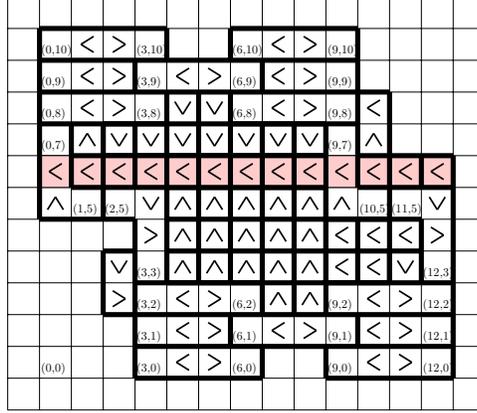

As the variable-gadget is built up from fanout- and corner-gadgets, the next lemma follows directly from~\autoref{lem:corner-gadget} and~\autoref{lem:fanout}.
\begin{lemma}
If any empty cell in a variable-gadget is assigned, there is only one valid way to assign the other empty cells.
\end{lemma}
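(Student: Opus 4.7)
The plan is to exploit the compositional structure of the variable-gadget, which (as stated in the text preceding \autoref{fig:vgscb}) is assembled from two fanout-gadgets, four modified corner-gadgets, and a horizontal core-line of $1$-boxes in the middle row. These pieces share cells or multi-cell boxes at their interfaces; this sharing is what will let a forced assignment in one sub-gadget propagate into the adjacent one.

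First I would prove a local propagation step: if some empty cell $c$ of the variable-gadget is assigned a direction by $\omega$, then every empty cell of the sub-gadget that contains $c$ is determined. When $c$ lies in a fanout-gadget this is immediate from \autoref{lem:fanout}; when $c$ lies in a corner-gadget it is immediate from \autoref{lem:corner-gadget} together with the exclusion of cycles, rereading the argument on the modified corner (the modification only removes a $1$-box that was not used in the proof of \autoref{lem:corner-gadget}); and when $c$ lies in the core-line, the $1$-boxes, the box condition, and acyclicity already force the entire line to point in a single direction toward whichever end will eventually drain into the Roma-cell.

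Next I would chain these local steps. Any two sub-gadgets that touch in the variable-gadget overlap in at least one cell, or share a multi-cell box in which a preset arrow has been placed; once one sub-gadget is fully assigned, this overlap provides an assigned cell in the neighboring sub-gadget, to which we apply the previous paragraph again. Since the sub-gadgets form a connected arrangement (see \autoref{fig:vgscb}), iterating this walk covers every empty cell of the variable-gadget, and the assignment produced is unique because each step in the chain is a forced implication.

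The main obstacle will be the interface case analysis in the second paragraph: one must check, for each glue point between a fanout-gadget, a modified corner-gadget, and the core-line, that the preset arrows drawn in \autoref{fig:vgscb} together with the shared box indeed transport exactly one determined arrow from the already-fixed side to a cell of the still-undetermined side. This is a routine but slightly tedious verification, made simpler by the fact that the preset $1$-box hints in the middle block have been added precisely to rule out any remaining ambiguity at these junctions.
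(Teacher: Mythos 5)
Your proposal follows essentially the same route as the paper, which disposes of this lemma in one line by observing that the variable-gadget is assembled from fanout- and corner-gadgets and invoking \autoref{lem:corner-gadget} and \autoref{lem:fanout}; your local-propagation-plus-chaining argument simply makes the implicit interface verification explicit. No gap here — you have just spelled out the details the paper leaves to the reader.
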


As each cell within the variable-gadget can only be validly assigned with one of two options, the resulting two possibilities to assign the variable-gadget are depicted in \autoref{fig:vgf}. Clearly, the two options should correspond to setting the variable true or false.

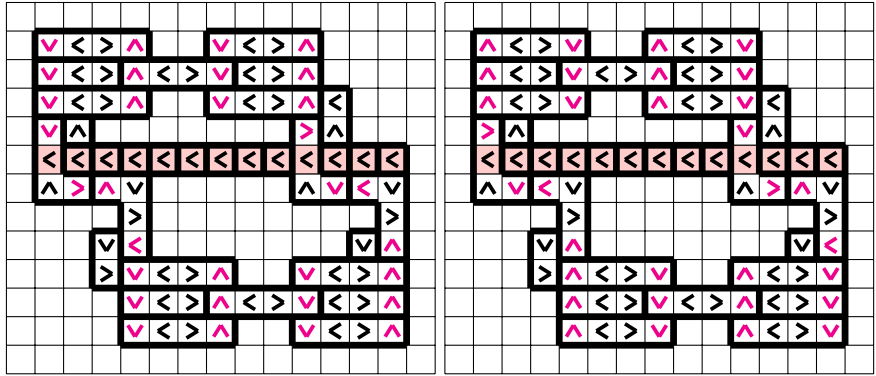
\begin{figure}[tb]
\begin{center}
\begin{tikzpicture}[scale=0.38]
\draw [step =1, black] (-1,-1) grid (14,12);

\filldraw[fill=red!20 ](0,6) rectangle (13,7);

\draw [line width=0.9mm, black] (3,0) -- (7,0);
\draw [line width=0.9mm, black] (9,0) -- (13,0);
\draw [line width=0.9mm, black] (3,1) -- (13,1);
\draw [line width=0.9mm, black] (2,2) -- (13,2);
\draw [line width=0.9mm, black] (3,3) -- (7,3);
\draw [line width=0.9mm, black] (9,3) -- (13,3);
\draw [line width=0.9mm, black] (2,4) -- (3,4);
\draw [line width=0.9mm, black] (11,4) -- (12,4);
\draw [line width=0.9mm, black] (0,5) -- (3,5);
\draw [line width=0.9mm, black] (9,5) -- (12,5);
\draw [line width=0.9mm, black] (1,6) -- (9,6);
\draw [line width=0.9mm, black] (10,6) -- (13,6);
\draw [line width=0.9mm, black] (1,7) -- (9,7);
\draw [line width=0.9mm, black] (10,7) -- (13,7);
\draw [line width=0.9mm, black] (0,8) -- (4,8);
\draw [line width=0.9mm, black] (6,8) -- (10,8);
\draw [line width=0.9mm, black] (0,9) -- (11,9);
\draw [line width=0.9mm, black] (0,10) -- (10,10);
\draw [line width=0.9mm, black] (0,11) -- (4,11);
\draw [line width=0.9mm, black] (6,11) -- (10,11);
\draw [line width=0.9mm, black] (0,5) -- (0,11);
\draw [line width=0.9mm, black] (1,6) -- (1,8);
\draw [line width=0.9mm, black] (2,2) -- (2,4);
\draw [line width=0.9mm, black] (2,5) -- (2,8);
\draw [line width=0.9mm, black] (3,0) -- (3,5);
\draw [line width=0.9mm, black] (3,6) -- (3,7);
\draw [line width=0.9mm, black] (3,9) -- (3,10);
\draw [line width=0.9mm, black] (4,3) -- (4,7);
\draw [line width=0.9mm, black] (4,8) -- (4,9);
\draw [line width=0.9mm, black] (4,10) -- (4,11);
\draw [line width=0.9mm, black] (5,6) -- (5,7);
\draw [line width=0.9mm, black] (6,1) -- (6,2);
\draw [line width=0.9mm, black] (6,6) -- (6,7);
\draw [line width=0.9mm, black] (6,8) -- (6,9);
\draw [line width=0.9mm, black] (6,10) -- (6,11);
\draw [line width=0.9mm, black] (7,0) -- (7,1);
\draw [line width=0.9mm, black] (7,2) -- (7,3);
\draw [line width=0.9mm, black] (7,6) -- (7,7);
\draw [line width=0.9mm, black] (7,9) -- (7,10);
\draw [line width=0.9mm, black] (8,6) -- (8,7);
\draw [line width=0.9mm, black] (9,0) -- (9,1);
\draw [line width=0.9mm, black] (9,2) -- (9,3);
\draw [line width=0.9mm, black] (9,5) -- (9,8);
\draw [line width=0.9mm, black] (10,1) -- (10,2);
\draw [line width=0.9mm, black] (10,6) -- (10,11);
\draw [line width=0.9mm, black] (11,3) -- (11,4);
\draw [line width=0.9mm, black] (11,5) -- (11,9);
\draw [line width=0.9mm, black] (12,3) -- (12,5);
\draw [line width=0.9mm, black] (12,6) -- (12,7);
\draw [line width=0.9mm, black] (13,0) -- (13,7);

\ra{5}{0}
\ra{11}{0}
\ra{5}{1}
\ra{8}{1}
\ra{11}{1}
\ra{2}{2}
\ra{5}{2}
\ra{11}{2}
\ra{3}{4}
\ra{12}{4}
\ra{2}{8}
\ra{8}{8}
\ra{2}{9}
\ra{5}{9}
\ra{8}{9}
\ra{2}{10}
\ra{8}{10}

\la{4}{0}
\la{10}{0}
\la{4}{1}
\la{7}{1}
\la{10}{1}
\la{4}{2}
\la{10}{2}
\la{0}{6}
\la{1}{6}
\la{2}{6}
\la{3}{6}
\la{4}{6}
\la{5}{6}
\la{6}{6}
\la{7}{6}
\la{8}{6}
\la{9}{6}
\la{10}{6}
\la{11}{6}
\la{12}{6}
\la{1}{8}
\la{7}{8}
\la{10}{8}
\la{1}{9}
\la{4}{9}
\la{7}{9}
\la{1}{10}
\la{7}{10}

\ua{0}{5}
\ua{9}{5}
\ua{1}{7}
\ua{10}{7}

\da{2}{3}
\da{11}{3}
\da{3}{5}
\da{12}{5}

\mda{3}{0}
\mua{6}{0}
\mda{9}{0}
\mua{12}{0}
\mda{3}{1}
\mua{6}{1}
\mda{9}{1}
\mua{12}{1}
\mda{3}{2}
\mua{6}{2}
\mda{9}{2}
\mua{12}{2}
\mla{3}{3}
\mua{12}{3}
\mra{1}{5}
\mua{2}{5}
\mda{10}{5}
\mla{11}{5}
\mda{0}{7}
\mra{9}{7}
\mda{0}{8}
\mua{3}{8}
\mda{6}{8}
\mua{9}{8}
\mda{0}{9}
\mua{3}{9}
\mda{6}{9}
\mua{9}{9}
\mda{0}{10}
\mua{3}{10}
\mda{6}{10}
\mua{9}{10}
\end{tikzpicture}\ 
\begin{tikzpicture}[scale=0.38]
\draw [step =1, black] (-1,-1) grid (14,12);

\filldraw[fill=red!20 ](0,6) rectangle (13,7);

\draw [line width=0.9mm, black] (3,0) -- (7,0);
\draw [line width=0.9mm, black] (9,0) -- (13,0);
\draw [line width=0.9mm, black] (3,1) -- (13,1);
\draw [line width=0.9mm, black] (2,2) -- (13,2);
\draw [line width=0.9mm, black] (3,3) -- (7,3);
\draw [line width=0.9mm, black] (9,3) -- (13,3);
\draw [line width=0.9mm, black] (2,4) -- (3,4);
\draw [line width=0.9mm, black] (11,4) -- (12,4);
\draw [line width=0.9mm, black] (0,5) -- (3,5);
\draw [line width=0.9mm, black] (9,5) -- (12,5);
\draw [line width=0.9mm, black] (1,6) -- (9,6);
\draw [line width=0.9mm, black] (10,6) -- (13,6);
\draw [line width=0.9mm, black] (1,7) -- (9,7);
\draw [line width=0.9mm, black] (10,7) -- (13,7);
\draw [line width=0.9mm, black] (0,8) -- (4,8);
\draw [line width=0.9mm, black] (6,8) -- (10,8);
\draw [line width=0.9mm, black] (0,9) -- (11,9);
\draw [line width=0.9mm, black] (0,10) -- (10,10);
\draw [line width=0.9mm, black] (0,11) -- (4,11);
\draw [line width=0.9mm, black] (6,11) -- (10,11);
\draw [line width=0.9mm, black] (0,5) -- (0,11);
\draw [line width=0.9mm, black] (1,6) -- (1,8);
\draw [line width=0.9mm, black] (2,2) -- (2,4);
\draw [line width=0.9mm, black] (2,5) -- (2,8);
\draw [line width=0.9mm, black] (3,0) -- (3,5);
\draw [line width=0.9mm, black] (3,6) -- (3,7);
\draw [line width=0.9mm, black] (3,9) -- (3,10);
\draw [line width=0.9mm, black] (4,3) -- (4,7);
\draw [line width=0.9mm, black] (4,8) -- (4,9);
\draw [line width=0.9mm, black] (4,10) -- (4,11);
\draw [line width=0.9mm, black] (5,6) -- (5,7);
\draw [line width=0.9mm, black] (6,1) -- (6,2);
\draw [line width=0.9mm, black] (6,6) -- (6,7);
\draw [line width=0.9mm, black] (6,8) -- (6,9);
\draw [line width=0.9mm, black] (6,10) -- (6,11);
\draw [line width=0.9mm, black] (7,0) -- (7,1);
\draw [line width=0.9mm, black] (7,2) -- (7,3);
\draw [line width=0.9mm, black] (7,6) -- (7,7);
\draw [line width=0.9mm, black] (7,9) -- (7,10);
\draw [line width=0.9mm, black] (8,6) -- (8,7);
\draw [line width=0.9mm, black] (9,0) -- (9,1);
\draw [line width=0.9mm, black] (9,2) -- (9,3);
\draw [line width=0.9mm, black] (9,5) -- (9,8);
\draw [line width=0.9mm, black] (10,1) -- (10,2);
\draw [line width=0.9mm, black] (10,6) -- (10,11);
\draw [line width=0.9mm, black] (11,3) -- (11,4);
\draw [line width=0.9mm, black] (11,5) -- (11,9);
\draw [line width=0.9mm, black] (12,3) -- (12,5);
\draw [line width=0.9mm, black] (12,6) -- (12,7);
\draw [line width=0.9mm, black] (13,0) -- (13,7);

\ra{5}{0}
\ra{11}{0}
\ra{5}{1}
\ra{8}{1}
\ra{11}{1}
\ra{2}{2}
\ra{5}{2}
\ra{11}{2}
\ra{3}{4}
\ra{12}{4}
\ra{2}{8}
\ra{8}{8}
\ra{2}{9}
\ra{5}{9}
\ra{8}{9}
\ra{2}{10}
\ra{8}{10}

\la{4}{0}
\la{10}{0}
\la{4}{1}
\la{7}{1}
\la{10}{1}
\la{4}{2}
\la{10}{2}
\la{0}{6}
\la{1}{6}
\la{2}{6}
\la{3}{6}
\la{4}{6}
\la{5}{6}
\la{6}{6}
\la{7}{6}
\la{8}{6}
\la{9}{6}
\la{10}{6}
\la{11}{6}
\la{12}{6}
\la{1}{8}
\la{7}{8}
\la{10}{8}
\la{1}{9}
\la{4}{9}
\la{7}{9}
\la{1}{10}
\la{7}{10}

\ua{0}{5}
\ua{9}{5}
\ua{1}{7}
\ua{10}{7}

\da{2}{3}
\da{11}{3}
\da{3}{5}
\da{12}{5}

\mua{3}{0}
\mda{6}{0}
\mua{9}{0}
\mda{12}{0}
\mua{3}{1}
\mda{6}{1}
\mua{9}{1}
\mda{12}{1}
\mua{3}{2}
\mda{6}{2}
\mua{9}{2}
\mda{12}{2}
\mua{3}{3}
\mla{12}{3}
\mda{1}{5}
\mla{2}{5}
\mra{10}{5}
\mua{11}{5}
\mra{0}{7}
\mda{9}{7}
\mua{0}{8}
\mda{3}{8}
\mua{6}{8}
\mda{9}{8}
\mua{0}{9}
\mda{3}{9}
\mua{6}{9}
\mda{9}{9}
\mua{0}{10}
\mda{3}{10}
\mua{6}{10}
\mda{9}{10}
\end{tikzpicture}
\end{center}
\caption{There are exactly two ways to fill in the empty cells of the variable-gadget. The filled-in arrows on the left side are interpreted as setting the variable to true, while the way
the arrows are filled in on the right side should mean that the variable is set to false. Note that this holds for both ways to fill in this gadget and regardless of whether the connected clauses appear above or below the core-line due to rotational symmetry of the fanout-gadgets used in both the variable-gadgets as well as the literal-gadgets (see Figure 4).}
\label{fig:vgf}
\end{figure}

We need a gadget to represent the literals contained in clauses. Literals can either be positive or negative. Positive literals are represented via the  gadget shown in \autoref{fig:LiteralGadgets} on the left-hand side. 
We start the construction, once again, with a fanout-gadget, which will serve as a connection for incoming straight-line gadgets by placing it right below one of its bottom 4-boxes. Only one straight-line gadget will be allowed to any given literal-gadget. We set the bottom left cell of this fanout-gadget to be $c_{0,0}$. Furthermore, we construct two boxes with $\beta(c_{3,3}) = \beta(c_{3,4}) = \beta(c_{3,5}) = \beta(c_{4,5}) = b_{1}$ and $\beta(c_{5,5}) = \beta(c_{6,5}) = \beta(c_{6,4}) = \beta(c_{6,3}) = b_{2}$. We set $\rho(c_{3,4}) = \DA,\; \rho(c_{3,5}) = \UA,\; \rho(c_{6,5}) = \UA$ and $\rho(c_{6,4}) = \DA$. The remaining cells encased within the gadget are filled with three boxes: $\beta(c_{4,2}) = \beta(c_{4,3}) = b_{3},\; \beta(c_{5,2}) = \beta(c_{5,3}) = b_{4}$ and $\beta(c_{4,4}) = \beta(c_{5,4}) = b_{5}$. We set $\rho(c_{4,2}) =\LA,\; \rho(c_{4,3}) = \DA,\; \rho(c_{5,2}) =\RA,\; \rho(c_{5,3}) = \DA,\; \rho(c_{4,4}) =\LA$ and $\rho(c_{5,4}) =\RA$. Lastly, we place pre-filled 1-boxes as shown.
Note that $\omega(c_{6,3}) =\LA$ allows the flow coming from the 1-boxes to leave the gadget downwards, while $\omega(c_{6,3}) =\RA$ funnels it back into the rest of the 1-boxes. This can be seen by following the flow of the 1-boxes in Figure~\ref{fig:LiteralGadgets} and will be relevant when constructing a clause-gadget.

Negative literals are represented via the gadget displayed in \autoref{fig:LiteralGadgets} on the right-hand side. 
We start the construction like the one for positive literals, but instead of placing 10 1-boxes on the right side, we use a somewhat different design on the left side.

\begin{figure}[tb]
\begin{center}
\scalebox{.55}{\begin{tikzpicture}[scale=0.8]
\draw [step =1, black] (-1,-1) grid (11,9);

\draw [line width=1.5mm, black] (0,0) -- (4,0);
\draw [line width=1.5mm, black] (6,0) -- (10,0);
\draw [line width=1.5mm, black] (0,1) -- (10,1);
\draw [line width=1.5mm, black] (0,2) -- (10,2);
\draw [line width=1.5mm, black] (0,3) -- (4,3);
\draw [line width=1.5mm, black] (6,3) -- (10,3);
\draw [line width=1.5mm, black] (4,4) -- (6,4);
\draw [line width=1.5mm, black] (7,4) -- (9,4);
\draw [line width=1.5mm, black] (4,5) -- (6,5);
\draw [line width=1.5mm, black] (7,5) -- (9,5);
\draw [line width=1.5mm, black] (3,6) -- (9,6);
\draw [line width=1.5mm, black] (7,7) -- (9,7);
\draw [line width=1.5mm, black] (7,8) -- (9,8);
\draw [line width=1.5mm, black] (0,0) -- (0,3);
\draw [line width=1.5mm, black] (3,1) -- (3,2);
\draw [line width=1.5mm, black] (3,3) -- (3,6);
\draw [line width=1.5mm, black] (4,0) -- (4,1);
\draw [line width=1.5mm, black] (4,2) -- (4,5);
\draw [line width=1.5mm, black] (5,2) -- (5,4);
\draw [line width=1.5mm, black] (5,5) -- (5,6);
\draw [line width=1.5mm, black] (6,0) -- (6,1);
\draw [line width=1.5mm, black] (6,2) -- (6,5);
\draw [line width=1.5mm, black] (7,1) -- (7,2);
\draw [line width=1.5mm, black] (7,3) -- (7,8);
\draw [line width=1.5mm, black] (8,3) -- (8,8);
\draw [line width=1.5mm, black] (9,3) -- (9,8);
\draw [line width=1.5mm, black] (10,0) -- (10,3);

\ra{2}{0}
\ra{8}{0}
\ra{2}{1}
\ra{5}{1}
\ra{8}{1}
\ra{2}{2}
\ra{5}{2}
\ra{8}{2}
\ra{7}{3}
\ra{5}{4}
\ra{8}{7}

\la{1}{0}
\la{7}{0}
\la{1}{1}
\la{4}{1}
\la{7}{1}
\la{1}{2}
\la{4}{2}
\la{7}{2}
\la{4}{4}
\la{7}{4}

\ua{8}{3}
\ua{8}{4}
\ua{3}{5}
\ua{6}{5}
\ua{8}{5}
\ua{8}{6}

\da{4}{3}
\da{5}{3}
\da{3}{4}
\da{6}{4}
\da{7}{5}
\da{7}{6}
\da{7}{7}

\ix{0}{0}
\ix{3}{0}
\ix{6}{0}
\ix{9}{0}
\ix{0}{1}
\ix{3}{1}
\ix{6}{1}
\ix{9}{1}
\ix{0}{2}
\ix{3}{2}
\ix{6}{2}
\ix{9}{2}
\ix{3}{3}
\ix{6}{3}
\ix{4}{5}
\ix{5}{5}
\end{tikzpicture}}\quad\scalebox{.55}{\begin{tikzpicture}[scale=0.8]
\draw [step =1, black] (-1,-1) grid (11,9);

\draw [line width=1.5mm, black] (0,0) -- (4,0);
\draw [line width=1.5mm, black] (6,0) -- (10,0);
\draw [line width=1.5mm, black] (0,1) -- (10,1);
\draw [line width=1.5mm, black] (0,2) -- (10,2);
\draw [line width=1.5mm, black] (0,3) -- (4,3);
\draw [line width=1.5mm, black] (6,3) -- (10,3);
\draw [line width=1.5mm, black] (2,4) -- (3,4);
\draw [line width=1.5mm, black] (4,4) -- (6,4);
\draw [line width=1.5mm, black] (2,5) -- (3,5);
\draw [line width=1.5mm, black] (4,5) -- (6,5);
\draw [line width=1.5mm, black] (2,6) -- (7,6);
\draw [line width=1.5mm, black] (2,7) -- (5,7);
\draw [line width=1.5mm, black] (6,7) -- (7,7);
\draw [line width=1.5mm, black] (2,8) -- (7,8);
\draw [line width=1.5mm, black] (0,0) -- (0,3);
\draw [line width=1.5mm, black] (2,3) -- (2,7);
\draw [line width=1.5mm, black] (2,7) -- (2,8);
\draw [line width=1.5mm, black] (3,1) -- (3,2);
\draw [line width=1.5mm, black] (3,3) -- (3,7);
\draw [line width=1.5mm, black] (3,7) -- (3,8);
\draw [line width=1.5mm, black] (4,0) -- (4,1);
\draw [line width=1.5mm, black] (4,2) -- (4,5);
\draw [line width=1.5mm, black] (4,6) -- (4,8);
\draw [line width=1.5mm, black] (5,2) -- (5,4);
\draw [line width=1.5mm, black] (5,5) -- (5,8);
\draw [line width=1.5mm, black] (6,0) -- (6,1);
\draw [line width=1.5mm, black] (6,2) -- (6,5);
\draw [line width=1.5mm, black] (6,6) -- (6,8);
\draw [line width=1.5mm, black] (7,1) -- (7,2);
\draw [line width=1.5mm, black] (7,3) -- (7,8);
\draw [line width=1.5mm, black] (10,0) -- (10,3);

\ra{2}{0}
\ra{8}{0}
\ra{2}{1}
\ra{5}{1}
\ra{8}{1}
\ra{2}{2}
\ra{5}{2}
\ra{8}{2}
\ra{2}{4}
\ra{5}{4}
\ra{2}{7}
\ra{3}{7}
\ra{5}{7}
\ra{6}{7}

\la{1}{0}
\la{7}{0}
\la{1}{1}
\la{4}{1}
\la{7}{1}
\la{1}{2}
\la{4}{2}
\la{7}{2}
\la{2}{3}
\la{4}{4}
\la{3}{6}

\ua{3}{5}
\ua{6}{5}
\ua{5}{6}
\ua{6}{6}

\da{4}{3}
\da{5}{3}
\da{3}{4}
\da{6}{4}
\da{2}{5}
\da{2}{6}
\da{4}{6}
\da{4}{7}

\ix{0}{0}
\ix{3}{0}
\ix{6}{0}
\ix{9}{0}
\ix{0}{1}
\ix{3}{1}
\ix{6}{1}
\ix{9}{1}
\ix{0}{2}
\ix{3}{2}
\ix{6}{2}
\ix{9}{2}
\ix{3}{3}
\ix{6}{3}
\ix{4}{5}
\ix{5}{5}
\end{tikzpicture}}
\end{center}
\caption{Variables may occur either positive (on the left) or negated (on the right). This leads to slightly different shapes of the corresponding literal-gadgets that form the basis of the clause-gadgets.}
\label{fig:LiteralGadgets}
\end{figure}
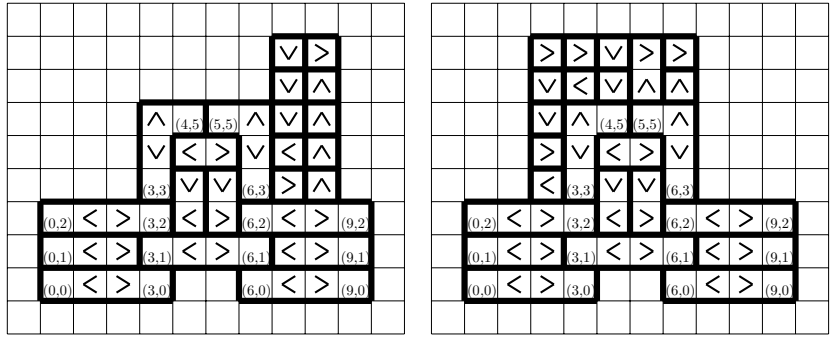

\begin{lemma}\label{lem:literal-gadgets}
If any empty cell in a literal-gadget is validly assigned, there is only one valid way to assign the other empty cells.
\end{lemma}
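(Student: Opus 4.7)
The literal-gadget is built on top of a fanout-gadget (rows $0$--$2$), so by \autoref{lem:fanout} the twelve cells of the fanout portion collapse to just two valid orientations once any single one of them is assigned. The only remaining empty cells are $c_{3,3}, c_{6,3}, c_{4,5}, c_{5,5}$, so it suffices to prove two things: (a) every valid fanout orientation forces a unique assignment of these four cells, and (b) conversely, assigning any one of the four cells forces a unique fanout orientation (via a propagation chain terminating at $c_{3,2}$ or $c_{6,2}$, at which point \autoref{lem:fanout} takes over).

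\textbf{Restricting the four extra cells by the box condition.} The preset arrows $\rho(c_{3,4})=\DA$ and $\rho(c_{3,5})=\UA$ sit inside the $4$-box $b_1=\{c_{3,3},c_{3,4},c_{3,5},c_{4,5}\}$, so the box condition rules out $\UA$ and $\DA$ at both $c_{3,3}$ and $c_{4,5}$, and furthermore requires $\omega(c_{3,3})\neq\omega(c_{4,5})$. Hence $\{\omega(c_{3,3}),\omega(c_{4,5})\}=\{\LA,\RA\}$. The symmetric argument applied to $b_2=\{c_{5,5},c_{6,5},c_{6,4},c_{6,3}\}$ yields $\{\omega(c_{5,5}),\omega(c_{6,3})\}=\{\LA,\RA\}$. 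So only two bits remain to be determined.

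\textbf{Pinning the two bits by cycle-avoidance.} Suppose the fanout orientation is such that $\omega(c_{3,2})=\UA$. Then the graph condition forbids $\omega(c_{3,3})=\RA$, because the presets in boxes $b_3=\{c_{4,2},c_{4,3}\}$ and $b_4=\{c_{5,2},c_{5,3}\}$ would yield the directed cycle $c_{3,3}\to c_{4,3}\to c_{4,2}\to c_{3,2}\to c_{3,3}$. Hence $\omega(c_{3,3})=\LA$, which by the box argument above forces $\omega(c_{4,5})=\RA$. The symmetric chain, using boxes $b_4$ and $b_5$ together with the preset $\rho(c_{5,4})=\RA$, fixes $\omega(c_{6,3})$ and $\omega(c_{5,5})$. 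The opposite fanout orientation $\omega(c_{3,2})=\DA$ is handled in the same style, with the analogous cycle $c_{4,3}\to c_{4,2}\to c_{3,2}\to c_{3,1}\to\dots$ and the column of $1$-boxes on the right absorbing (or not) the outgoing flow, exactly in line with the remark made after the description of the positive literal-gadget. This establishes (a). Direction (b) is an inversion of the same chains: starting from $c_{3,3}$ (or any other extra cell), the same implications run backwards and force $c_{3,2}$, after which \autoref{lem:fanout} finishes the job.

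\textbf{Main obstacle.} The only real subtlety is that the lemma covers both variants of the literal-gadget shown in \autoref{fig:LiteralGadgets}; in the negated variant the column of preset $1$-boxes is attached on the left rather than on the right, swapping the roles of $c_{3,3}$ and $c_{6,3}$. One therefore has to redo the cycle-avoidance trace on the mirrored side; however, since the $1$-box column is entirely preset and acts solely as a sink/deflector for the outgoing signal, it never interacts with the box condition of $b_1$ or $b_2$, and the argument is literally symmetric.
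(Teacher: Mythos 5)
Your decomposition is exactly the paper's: the fanout portion is dispatched by \autoref{lem:fanout}, and the four remaining empty cells $c_{3,3},c_{4,5},c_{5,5},c_{6,3}$ are forced by combining the box condition on $b_1,b_2$ (which correctly pins each pair to complementary values in $\{\LA,\RA\}$) with acyclicity through the fully preset $2$-boxes; the paper packages this as two circular implication chains starting from $\omega(c_{6,2})=\DA$ and $\omega(c_{6,2})=\UA$. Your first forcing step is also the paper's: $\omega(c_{3,2})=\UA$ together with the presets of $b_3$ excludes $\omega(c_{3,3})=\RA$ via the cycle $c_{3,3}\to c_{4,3}\to c_{4,2}\to c_{3,2}\to c_{3,3}$, so $\omega(c_{3,3})=\LA$ and $\omega(c_{4,5})=\RA$.

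However, the mechanisms you cite for the remaining steps are not the ones that actually do the work. With $\omega(c_{3,2})=\UA$ the fanout gives $\omega(c_{6,2})=\DA$, so the mirror cycle through $b_4$, namely $c_{6,3}\to c_{5,3}\to c_{5,2}\to c_{6,2}\to c_{6,3}$, is unavailable (it needs $\omega(c_{6,2})=\UA$), and $b_5$ with $\rho(c_{5,4})=\RA$ produces no forbidden cycle at all; what fixes the right pair in this orientation is simply that $\omega(c_{5,5})\in\{\LA,\RA\}$ by the box condition on $b_2$ and $\omega(c_{5,5})=\LA$ would close the $2$-cycle $c_{4,5}\to c_{5,5}\to c_{4,5}$, whence $\omega(c_{5,5})=\RA$ and $\omega(c_{6,3})=\LA$. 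Likewise, in the opposite orientation $\omega(c_{3,2})=\DA$ your ``analogous cycle $c_{4,3}\to c_{4,2}\to c_{3,2}\to c_{3,1}\to\dots$'' is not a cycle (the flow just descends into the fanout), and indeed no cycle should appear on that side since $\omega(c_{3,3})=\RA$ is the valid value there; the forcing instead starts on the right column, where $\omega(c_{6,2})=\UA$ makes the $b_4$-cycle above exclude $\omega(c_{6,3})=\LA$, and then propagates leftwards via $b_2$, the $2$-cycle between $c_{5,5}$ and $c_{4,5}$, and $b_1$, exactly as in the paper's second chain. These are localized misattributions rather than a wrong approach—once corrected, your argument (including the reversibility giving direction (b), and the observation that the negated gadget has identical interior presets so the same trace applies) coincides with the paper's proof.
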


\begin{proof}
We consider positive literal-gadgets only; the proof for negative literal-gadgets is analogous. 
The statement for the cells belonging to the fanout-gadget follows from \autoref{lem:fanout}. The statement for $c_{3,3}, c_{4,5}, c_{5,5}$ and $c_{6,3}$ can be shown by utilizing the rules of Roma. More precisely, we find:
$(\omega(c_{6,2}) = \DA) \rightarrow (\omega(c_{3,2}) = \UA) \rightarrow (\omega(c_{3,3}) =\LA) \rightarrow (\omega(c_{4,5}) =\RA) \rightarrow (\omega(c_{5,5}) =\RA) \rightarrow (\omega(c_{6,3}) =\LA) \rightarrow (\omega(c_{6,2}) = \DA)$ and
$(\omega(c_{6,2}) = \UA) \rightarrow (\omega(c_{6,3}) =\RA) \rightarrow (\omega(c_{5,5}) =\LA) \rightarrow (\omega(c_{4,5}) =\LA) \rightarrow (\omega(c_{3,3}) =\RA) \rightarrow (\omega(c_{3,2}) = \DA) \rightarrow (\omega(c_{6,2}) = \UA)$.
\end{proof}

\begin{figure}[tb]
\begin{center}
\scalebox{.55}{\begin{tikzpicture}[scale=0.8]
\draw [step =1, black] (-1,-1) grid (11,9);

\draw [line width=1.175mm, black] (0,0) -- (4,0);
\draw [line width=1.175mm, black] (6,0) -- (10,0);
\draw [line width=1.175mm, black] (0,1) -- (10,1);
\draw [line width=1.175mm, black] (0,2) -- (10,2);
\draw [line width=1.175mm, black] (0,3) -- (4,3);
\draw [line width=1.175mm, black] (6,3) -- (10,3);
\draw [line width=1.175mm, black] (2,4) -- (3,4);
\draw [line width=1.175mm, black] (4,4) -- (6,4);
\draw [line width=1.175mm, black] (2,5) -- (3,5);
\draw [line width=1.175mm, black] (4,5) -- (6,5);
\draw [line width=1.175mm, black] (2,6) -- (7,6);
\draw [line width=1.175mm, black] (2,7) -- (5,7);
\draw [line width=1.175mm, black] (6,7) -- (7,7);
\draw [line width=1.175mm, black] (2,8) -- (7,8);
\draw [line width=1.175mm, black] (0,0) -- (0,3);
\draw [line width=1.175mm, black] (2,3) -- (2,7);
\draw [line width=1.175mm, black] (2,7) -- (2,8);
\draw [line width=1.175mm, black] (3,1) -- (3,2);
\draw [line width=1.175mm, black] (3,3) -- (3,7);
\draw [line width=1.175mm, black] (3,7) -- (3,8);
\draw [line width=1.175mm, black] (4,0) -- (4,1);
\draw [line width=1.175mm, black] (4,2) -- (4,5);
\draw [line width=1.175mm, black] (4,6) -- (4,8);
\draw [line width=1.175mm, black] (5,2) -- (5,4);
\draw [line width=1.175mm, black] (5,5) -- (5,8);
\draw [line width=1.175mm, black] (6,0) -- (6,1);
\draw [line width=1.175mm, black] (6,2) -- (6,5);
\draw [line width=1.175mm, black] (6,6) -- (6,8);
\draw [line width=1.175mm, black] (7,1) -- (7,2);
\draw [line width=1.175mm, black] (7,3) -- (7,8);
\draw [line width=1.175mm, black] (10,0) -- (10,3);

\ra{2}{0}
\ra{8}{0}
\ra{2}{1}
\ra{5}{1}
\ra{8}{1}
\ra{2}{2}
\ra{5}{2}
\ra{8}{2}
\ra{2}{4}
\ra{5}{4}
\ra{2}{7}
\ra{3}{7}
\ra{5}{7}
\ra{6}{7}

\la{1}{0}
\la{7}{0}
\la{1}{1}
\la{4}{1}
\la{7}{1}
\la{1}{2}
\la{4}{2}
\la{7}{2}
\la{2}{3}
\la{4}{4}
\la{3}{6}

\ua{3}{5}
\ua{6}{5}
\ua{5}{6}
\ua{6}{6}

\da{4}{3}
\da{5}{3}
\da{3}{4}
\da{6}{4}
\da{2}{5}
\da{2}{6}
\da{4}{6}
\da{4}{7}

\mda{0}{0}
\mua{3}{0}
\mda{6}{0}
\mua{9}{0}
\mda{0}{1}
\mua{3}{1}
\mda{6}{1}
\mua{9}{1}
\mda{0}{2}
\mua{3}{2}
\mda{6}{2}
\mua{9}{2}
\mla{3}{3}
\mla{6}{3}
\mra{4}{5}
\mra{5}{5}
\end{tikzpicture}}\quad\scalebox{.55}{\begin{tikzpicture}[scale=0.8]
\draw [step =1, black] (-1,-1) grid (11,9);

\draw [line width=1.175mm, black] (0,0) -- (4,0);
\draw [line width=1.175mm, black] (6,0) -- (10,0);
\draw [line width=1.175mm, black] (0,1) -- (10,1);
\draw [line width=1.175mm, black] (0,2) -- (10,2);
\draw [line width=1.175mm, black] (0,3) -- (4,3);
\draw [line width=1.175mm, black] (6,3) -- (10,3);
\draw [line width=1.175mm, black] (2,4) -- (3,4);
\draw [line width=1.175mm, black] (4,4) -- (6,4);
\draw [line width=1.175mm, black] (2,5) -- (3,5);
\draw [line width=1.175mm, black] (4,5) -- (6,5);
\draw [line width=1.175mm, black] (2,6) -- (7,6);
\draw [line width=1.175mm, black] (2,7) -- (5,7);
\draw [line width=1.175mm, black] (6,7) -- (7,7);
\draw [line width=1.175mm, black] (2,8) -- (7,8);
\draw [line width=1.175mm, black] (0,0) -- (0,3);
\draw [line width=1.175mm, black] (2,3) -- (2,7);
\draw [line width=1.175mm, black] (2,7) -- (2,8);
\draw [line width=1.175mm, black] (3,1) -- (3,2);
\draw [line width=1.175mm, black] (3,3) -- (3,7);
\draw [line width=1.175mm, black] (3,7) -- (3,8);
\draw [line width=1.175mm, black] (4,0) -- (4,1);
\draw [line width=1.175mm, black] (4,2) -- (4,5);
\draw [line width=1.175mm, black] (4,6) -- (4,8);
\draw [line width=1.175mm, black] (5,2) -- (5,4);
\draw [line width=1.175mm, black] (5,5) -- (5,8);
\draw [line width=1.175mm, black] (6,0) -- (6,1);
\draw [line width=1.175mm, black] (6,2) -- (6,5);
\draw [line width=1.175mm, black] (6,6) -- (6,8);
\draw [line width=1.175mm, black] (7,1) -- (7,2);
\draw [line width=1.175mm, black] (7,3) -- (7,8);
\draw [line width=1.175mm, black] (10,0) -- (10,3);

\ra{2}{0}
\ra{8}{0}
\ra{2}{1}
\ra{5}{1}
\ra{8}{1}
\ra{2}{2}
\ra{5}{2}
\ra{8}{2}
\ra{2}{4}
\ra{5}{4}
\ra{2}{7}
\ra{3}{7}
\ra{5}{7}
\ra{6}{7}

\la{1}{0}
\la{7}{0}
\la{1}{1}
\la{4}{1}
\la{7}{1}
\la{1}{2}
\la{4}{2}
\la{7}{2}
\la{2}{3}
\la{4}{4}
\la{3}{6}

\ua{3}{5}
\ua{6}{5}
\ua{5}{6}
\ua{6}{6}

\da{4}{3}
\da{5}{3}
\da{3}{4}
\da{6}{4}
\da{2}{5}
\da{2}{6}
\da{4}{6}
\da{4}{7}

\mua{0}{0}
\mda{3}{0}
\mua{6}{0}
\mda{9}{0}
\mua{0}{1}
\mda{3}{1}
\mua{6}{1}
\mda{9}{1}
\mua{0}{2}
\mda{3}{2}
\mua{6}{2}
\mda{9}{2}
\mra{3}{3}
\mra{6}{3}
\mla{4}{5}
\mla{5}{5}
\end{tikzpicture}}
\end{center}
\caption{Negated literal-gadget filled in. The filled-in arrows on the left side correspond to a true variable being connected to this gadget, in which case the literal does not lead to satisfying the corresponding clause. When integrated into a clause-gadget (see Figure 6) the upper two filled-in arrows will form a closed cycle with the preset arrows, which leads to the \textsc{Roma} instance not being valid. The filled-in arrows on the right side correspond to a false variable being connected. In this case, the literal leads to the clause being satisfied by breaking this closed cycle.}
\label{fig:nlgf}
\end{figure}

The two possible valid assignments for negative literal-gadgets are shown in \autoref{fig:nlgf}.
Note that in this case $\omega(c_{4,5}) =\RA $ funnels the flow back into the rest of the 1-boxes, while $\omega(c_{4,5}) =\LA$ allows it to leave the gadget downwards, which is a  behavior opposite of positive literal-gadgets.
In \autoref{fig:clg}, it is shown how the literal gadgets are arranged to form a clause gadget. 
Clearly, the gadget can be adapted to contain an arbitrary number of literals.
Whether the literal-gadgets represent positive or negative literals does not matter. W.l.o.g., we chose two positive and one negative literal for the demonstration of the clause-gadget. The individual literal-gadgets are now connected through a cycle of 1-boxes.

\begin{figure}[tb]
\begin{center}
\begin{tikzpicture}[scale=0.33]
\draw [step =1, black] (-1,-1) grid (35,10);

\literal{0}{0}
\literaln{12}{0}
\literal{24}{0}
\draw [line width=0.9mm, black] (0,7) -- (7,7);
\draw [line width=0.9mm, black] (9,7) -- (14,7);
\draw [line width=0.9mm, black] (19,7) -- (31,7);
\draw [line width=0.9mm, black] (33,7) -- (34,7);
\draw [line width=0.9mm, black] (0,8) -- (7,8);
\draw [line width=0.9mm, black] (9,8) -- (14,8);
\draw [line width=0.9mm, black] (19,8) -- (31,8);
\draw [line width=0.9mm, black] (33,8) -- (34,8);
\draw [line width=0.9mm, black] (0,9) -- (34,9);
\draw [line width=0.9mm, black] (0,7) -- (0,9);
\draw [line width=0.9mm, black] (1,7) -- (1,9);
\draw [line width=0.9mm, black] (2,7) -- (2,9);
\draw [line width=0.9mm, black] (3,7) -- (3,9);
\draw [line width=0.9mm, black] (4,7) -- (4,9);
\draw [line width=0.9mm, black] (5,7) -- (5,9);
\draw [line width=0.9mm, black] (6,7) -- (6,9);
\draw [line width=0.9mm, black] (7,8) -- (7,9);
\draw [line width=0.9mm, black] (8,8) -- (8,9);
\draw [line width=0.9mm, black] (9,8) -- (9,9);
\draw [line width=0.9mm, black] (10,7) -- (10,9);
\draw [line width=0.9mm, black] (11,7) -- (11,9);
\draw [line width=0.9mm, black] (12,7) -- (12,9);
\draw [line width=0.9mm, black] (13,7) -- (13,9);
\draw [line width=0.9mm, black] (14,8) -- (14,9);
\draw [line width=0.9mm, black] (15,8) -- (15,9);
\draw [line width=0.9mm, black] (16,8) -- (16,9);
\draw [line width=0.9mm, black] (17,8) -- (17,9);
\draw [line width=0.9mm, black] (18,8) -- (18,9);
\draw [line width=0.9mm, black] (19,8) -- (19,9);
\draw [line width=0.9mm, black] (20,7) -- (20,9);
\draw [line width=0.9mm, black] (21,7) -- (21,9);
\draw [line width=0.9mm, black] (22,7) -- (22,9);
\draw [line width=0.9mm, black] (23,7) -- (23,9);
\draw [line width=0.9mm, black] (24,7) -- (24,9);
\draw [line width=0.9mm, black] (25,7) -- (25,9);
\draw [line width=0.9mm, black] (26,7) -- (26,9);
\draw [line width=0.9mm, black] (27,7) -- (27,9);
\draw [line width=0.9mm, black] (28,7) -- (28,9);
\draw [line width=0.9mm, black] (29,7) -- (29,9);
\draw [line width=0.9mm, black] (30,7) -- (30,9);
\draw [line width=0.9mm, black] (31,8) -- (31,9);
\draw [line width=0.9mm, black] (32,8) -- (32,9);
\draw [line width=0.9mm, black] (33,8) -- (33,9);
\draw [line width=0.9mm, black] (34,7) -- (34,9);

\ra{0}{7}
\ra{1}{7}
\ra{2}{7}
\ra{3}{7}
\ra{4}{7}
\ra{5}{7}
\ra{6}{7}
\ra{9}{7}
\ra{10}{7}
\ra{11}{7}
\ra{12}{7}
\ra{13}{7}
\ra{19}{7}
\ra{20}{7}
\ra{21}{7}
\ra{22}{7}
\ra{23}{7}
\ra{24}{7}
\ra{25}{7}
\ra{26}{7}
\ra{27}{7}
\ra{28}{7}
\ra{29}{7}
\ra{30}{7}

\la{1}{8}
\la{2}{8}
\la{3}{8}
\la{4}{8}
\la{5}{8}
\la{6}{8}
\la{7}{8}
\la{8}{8}
\la{9}{8}
\la{10}{8}
\la{11}{8}
\la{12}{8}
\la{13}{8}
\la{14}{8}
\la{15}{8}
\la{16}{8}
\la{17}{8}
\la{18}{8}
\la{19}{8}
\la{20}{8}
\la{21}{8}
\la{22}{8}
\la{23}{8}
\la{24}{8}
\la{25}{8}
\la{26}{8}
\la{27}{8}
\la{28}{8}
\la{29}{8}
\la{30}{8}
\la{31}{8}
\la{32}{8}
\la{33}{8}

\ua{33}{7}

\da{0}{8}
\end{tikzpicture}
\end{center}
\caption{Clause-gadget consisting of two positive and one negative literal.}
\label{fig:clg}
\end{figure}

\begin{lemma}\label{lem:clause-gadget}
At least one literal-gadget within a clause-gadget needs to allow the flow to leave the gadget downwards, or the assignment will be invalid.
\end{lemma}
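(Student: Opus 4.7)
The plan is to argue by contradiction: assume that in a valid assignment $\omega$, every literal-gadget within a fixed clause-gadget funnels the flow of the surrounding $1$-box ring back into itself, i.e., no literal-gadget permits an exit downward. We then show that the preset $1$-boxes forming the top border of the clause-gadget, together with the top cells of the literal-gadgets, form a closed directed cycle in $G(\omega)$, contradicting the graph condition.

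Concretely, I would first describe the $1$-box ring depicted in \autoref{fig:clg}: the preset $\RA$-arrows on the lower row, the preset $\LA$-arrows on the upper row, and the two connecting preset arrows $\UA$ at the right end and $\DA$ at the left end. This is a cycle of $1$-boxes whose flow has no internal way out; the only potential ``exits'' are the gap cells sitting directly above each literal-gadget, which are shared with the top $2$-boxes of the literal-gadget (the cells playing the role of $c_{4,5}$ and $c_{5,5}$ in the local coordinates of \autoref{fig:LiteralGadgets}).

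Next, I would invoke \autoref{lem:literal-gadgets} and the case analysis carried out in its proof. For a positive literal-gadget, the two valid completions either set $\omega(c_{4,5})=\RA,\omega(c_{5,5})=\RA$ (so the flow coming from the left $1$-boxes is redirected downward through $\omega(c_{6,3})=\LA$ and leaves the gadget), or set $\omega(c_{4,5})=\LA,\omega(c_{5,5})=\LA$ (so the flow continues inside the $1$-box ring via $\omega(c_{6,3})=\RA$). The ``flow leaves the gadget downward'' case therefore corresponds exactly to the first option (and analogously for the negative literal-gadget, as indicated right after \autoref{fig:LiteralGadgets}). Under our assumption, every literal-gadget is in the second case, so the arrows along the upper $1$-box row, together with the two connector arrows and the top cells of each literal-gadget, all point consistently around the ring.

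Finally, I would trace this ring explicitly: starting from any cell on the upper $\LA$-row, the $\LA$-arrows transport the flow leftward, the left connector $\DA$ sends it down, the bottom $\RA$-row transports it rightward, the right connector $\UA$ sends it back up, and in each gap above a literal-gadget the assumed assignment of $c_{4,5},c_{5,5}$ bridges the gap without letting the flow escape. This produces a closed directed cycle in $G(\omega)$, contradicting acyclicity. Hence at least one literal-gadget must admit the downward exit, as claimed. The main obstacle is the bookkeeping: one must verify that in \emph{every} configuration of positive/negative literals along the clause-gadget, the induced orientation of the $1$-box ring is consistent, so that no matter which literal-gadgets appear, closing every exit forces a full directed cycle. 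Once this is checked (a local calculation using \autoref{lem:literal-gadgets} and the preset arrows of \autoref{fig:clg}), the contradiction with the graph condition is immediate.
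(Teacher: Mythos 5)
Your proposal is correct and follows essentially the same route as the paper: the paper's (very brief) proof also argues that if no literal-gadget lets the flow exit downwards, each one funnels it back into the surrounding cycle of preset 1-boxes, so a closed directed flow-cycle arises, violating the graph condition. Your version merely spells out the ring-tracing and the case analysis from \autoref{lem:literal-gadgets} in more detail than the paper does.
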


\begin{proof}
If the flow is not allowed to leave the gadget downwards through at least one of its literal-gadgets, it is funneled back into the cycle of 1-boxes it stems from. This holds true for  both positive and  negative literal-gadgets. If every literal-gadget within a clause-gadget funnels it back an actual flow-cycle is created, invalidating the overall assignment.
\end{proof}

\begin{figure}[tb]\centering
	\scalebox{1.1}{\begin{tikzpicture}
\filldraw[fill=red!20,draw=none ](0,-0.2) rectangle (7,0.25);
\filldraw[fill=red!20,draw=none ](7,0.25) rectangle (6.55,1.25);
\filldraw[fill=red!20,draw=none ](6.55,1.25) rectangle (2,0.8);
\filldraw[fill=red!20,draw=none ](2,1.25) rectangle (2.45,2.25);
\filldraw[fill=red!20,draw=none ](2.45,2.25) rectangle (4.55,1.8);

\node[text width=0.25cm, anchor=west] at (2,0){$a$};
\node[text width=0.25cm, anchor=west] at (4,0){$b$};
\node[text width=0.25cm, anchor=west] at (6,0){$c$};

\node[text width=0.25cm, anchor=west] at (0,1){$C_{1}$};
\node[text width=0.25cm, anchor=west] at (0,2){$C_{2}$};
\node[text width=0.25cm, anchor=west] at (0,3){$C_{3}$};

\draw[draw=black] (2.25,0.25) -- (2.25,3);
\draw[draw=black] (2.25,3) -- (0.75,3);
\draw[draw=black] (4.25,0.25) -- (4.25,2);
\draw[draw=black] (4.25,2) -- (0.75,2);
\draw[draw=black] (6.25,0.25) -- (6.25,1);
\draw[draw=black] (6.25,1) -- (0.75,1);
\end{tikzpicture}}
	\caption{As depicted in the construction by Lichtenstein~\citen{DBLP:journals/siamcomp/Lichtenstein82} a planar embedding of a \textsc{3SAT} formula is shown where the variables are placed on a horizontal line and the clauses are placed on a vertical line. The variables are then connected with the clauses in which they appear via rectangular lines. If those lines cross, the crossing is replaced by a crossover-gadget, which is depicted in~\autoref{fig:L-crossing}. The core-line connecting the variables with each other is depicted in red.}
	\label{fig:L-coreLine}
\end{figure}
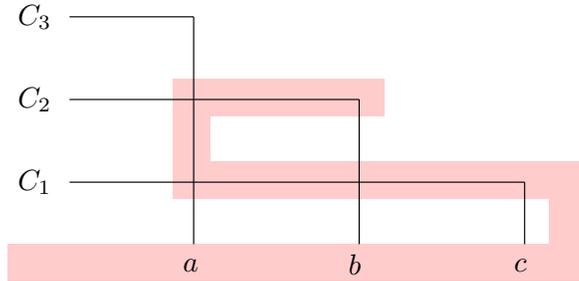

\begin{figure}[tb]\centering
	\scalebox{.94}{\begin{tikzpicture}
\filldraw[fill=red!20 ](0,-0.25) rectangle (12.5,0.25);

\node[text width=0.25cm, anchor=west] at (0,0){$a_{1}$};
\node[text width=0.25cm, anchor=west] at (1.5,0){$\gamma$};
\node[text width=0.25cm, anchor=west] at (3,0){$b_{1}$};
\node[text width=0.25cm, anchor=west] at (4.5,0){$\beta$};
\node[text width=0.25cm, anchor=west] at (6,0){$\xi$};
\node[text width=0.25cm, anchor=west] at (7.5,0){$\delta$};
\node[text width=0.25cm, anchor=west] at (9,0){$b_{2}$};
\node[text width=0.25cm, anchor=west] at (10.5,0){$\alpha$};
\node[text width=0.25cm, anchor=west] at (12,0){$a_{2}$};

\node[text width=0.25cm, anchor=west] at (0.75,1.2){1};
\node[text width=0.25cm, anchor=west] at (1.5,2.2){2};
\node[text width=0.25cm, anchor=west] at (8.25,1.2){3};
\node[text width=0.25cm, anchor=west] at (9.75,1.2){4};
\node[text width=0.25cm, anchor=west] at (9,2.2){5};
\node[text width=0.25cm, anchor=west] at (7.5,3.2){6};
\node[text width=0.25cm, anchor=west] at (7.5,4.2){7};
\node[text width=0.25cm, anchor=west] at (8.25,5.2){8};
\node[text width=0.25cm, anchor=west] at (7.5,6.2){9};

\node[text width=0.25cm, anchor=west] at (11.25,-1.3){10};
\node[text width=0.25cm, anchor=west] at (10.5,-2.3){11};
\node[text width=0.25cm, anchor=west] at (3.75,-1.3){12};
\node[text width=0.25cm, anchor=west] at (2.25,-1.3){13};
\node[text width=0.25cm, anchor=west] at (3,-2.3){14};
\node[text width=0.25cm, anchor=west] at (4.5,-3.3){15};
\node[text width=0.25cm, anchor=west] at (4.5,-4.3){16};
\node[text width=0.25cm, anchor=west] at (3.75,-5.3){17};
\node[text width=0.25cm, anchor=west] at (4.5,-6.3){18};

\draw[draw=black] (0.75,1) rectangle ++(0.5,0.5);
\draw[draw=black] (1.5,2) rectangle ++(0.5,0.5);
\draw[draw=black] (8.25,1) rectangle ++(0.5,0.5);
\draw[draw=black] (9.75,1) rectangle ++(0.5,0.5);
\draw[draw=black] (9,2) rectangle ++(0.5,0.5);
\draw[draw=black] (7.5,3) rectangle ++(0.5,0.5);
\draw[draw=black] (7.5,4) rectangle ++(0.5,0.5);
\draw[draw=black] (8.25,5) rectangle ++(0.5,0.5);
\draw[draw=black] (7.5,6) rectangle ++(0.5,0.5);

\draw[draw=black] (11.25,-1.5) rectangle ++(0.5,0.5);
\draw[draw=black] (10.5,-2.5) rectangle ++(0.5,0.5);
\draw[draw=black] (3.75,-1.5) rectangle ++(0.5,0.5);
\draw[draw=black] (2.25,-1.5) rectangle ++(0.5,0.5);
\draw[draw=black] (3,-2.5) rectangle ++(0.5,0.5);
\draw[draw=black] (4.5,-3.5) rectangle ++(0.5,0.5);
\draw[draw=black] (4.5,-4.5) rectangle ++(0.5,0.5);
\draw[draw=black] (3.75,-5.5) rectangle ++(0.5,0.5);
\draw[draw=black] (4.5,-6.5) rectangle ++(0.5,0.5);

\draw[draw=black] (0.25,0.25) -- (0.25,1);
\draw[draw=black] (0.25,1) -- (0.75,1);
\draw[draw=black] (1.75,0.25) -- (1.75,1);
\draw[draw=black] (1.75,1) -- (0.75,1);

\draw[draw=black] (0.25,0.25) -- (0.25,2);
\draw[draw=black] (0.25,2) -- (1.5,2);
\draw[draw=black] (1.75,0.25) -- (1.75,1.5);
\draw[draw=black] (1.75,1.5) -- (1.75,2);
\draw[draw=black] (3.25,0.25) -- (3.25,2);
\draw[draw=black] (3.25,2) -- (1.5,2);

\draw[draw=black] (7.75,0.25) -- (7.75,1);
\draw[draw=black] (7.75,1) -- (8.25,1);
\draw[draw=black] (9.25,0.25) -- (9.25,1);
\draw[draw=black] (9.25,1) -- (8.25,1);

\draw[draw=black] (9.25,0.25) -- (9.25,1);
\draw[draw=black] (9.25,1) -- (9.75,1);
\draw[draw=black] (10.75,0.25) -- (10.75,1);
\draw[draw=black] (10.75,1) -- (9.75,1);

\draw[draw=black] (7.75,0.25) -- (7.75,2);
\draw[draw=black] (7.75,2) -- (9,2);
\draw[draw=black] (10.75,0.25) -- (10.75,2);
\draw[draw=black] (10.75,2) -- (9,2);

\draw[draw=black] (4.75,0.25) -- (4.75,3);
\draw[draw=black] (4.75,3) -- (7.5,3);
\draw[draw=black] (10.75,0.25) -- (10.75,3);
\draw[draw=black] (10.75,3) -- (7.5,3);
\draw[draw=black] (6.25,0.25) -- (6.25,2.5);
\draw[draw=black] (6.25,2.5) -- (7.75,2.5);
\draw[draw=black] (7.75,2.5) -- (7.75,3);

\draw[draw=black] (4.75,0.25) -- (4.75,4);
\draw[draw=black] (4.75,4) -- (7.5,4);
\draw[draw=black] (10.75,0.25) -- (10.75,4);
\draw[draw=black] (10.75,4) -- (7.5,4);

\draw[draw=black] (4.75,0.25) -- (4.75,5);
\draw[draw=black] (4.75,5) -- (8.25,5);
\draw[draw=black] (12.25,0.25) -- (12.25,5);
\draw[draw=black] (12.25,5) -- (8.25,5);

\draw[draw=black] (3.25,0.25) -- (3.25,6);
\draw[draw=black] (3.25,6) -- (7.5,6);
\draw[draw=black] (4.75,0.25) -- (4.75,5.5);
\draw[draw=black] (4.75,5.5) -- (7.75,5.5);
\draw[draw=black] (7.75,5.5) -- (7.75,6);
\draw[draw=black] (12.25,0.25) -- (12.25,6);
\draw[draw=black] (12.25,6) -- (7.5,6);

\draw[draw=black] (12.25,-0.25) -- (12.25,-1);
\draw[draw=black] (12.25,-1) -- (11.25,-1);
\draw[draw=black] (10.75,-0.25) -- (10.75,-1);
\draw[draw=black] (10.75,-1) -- (11.25,-1);

\draw[draw=black] (12.25,-0.25) -- (12.25,-2);
\draw[draw=black] (12.25,-2) -- (10.5,-2);
\draw[draw=black] (10.75,-0.25) -- (10.75,-1.5);
\draw[draw=black] (10.75,-1.5) -- (10.75,-2);
\draw[draw=black] (9.25,-0.25) -- (9.25,-2);
\draw[draw=black] (9.25,-2) -- (10.5,-2);

\draw[draw=black] (4.75,-0.25) -- (4.75,-1);
\draw[draw=black] (4.75,-1) -- (3.75,-1);
\draw[draw=black] (3.25,-0.25) -- (3.25,-1);
\draw[draw=black] (3.25,-1) -- (3.75,-1);

\draw[draw=black] (3.25,-0.25) -- (3.25,-1);
\draw[draw=black] (3.25,-1) -- (2.25,-1);
\draw[draw=black] (1.75,-0.25) -- (1.75,-1);
\draw[draw=black] (1.75,-1) -- (2.25,-1);

\draw[draw=black] (4.75,-0.25) -- (4.75,-2);
\draw[draw=black] (4.75,-2) -- (3,-2);
\draw[draw=black] (1.75,-0.25) -- (1.75,-2);
\draw[draw=black] (1.75,-2) -- (3,-2);

\draw[draw=black] (7.75,-0.25) -- (7.75,-3);
\draw[draw=black] (7.75,-3) -- (4.5,-3);
\draw[draw=black] (1.75,-0.25) -- (1.75,-3);
\draw[draw=black] (1.75,-3) -- (4.5,-3);
\draw[draw=black] (6.25,-0.25) -- (6.25,-2.5);
\draw[draw=black] (6.25,-2.5) -- (4.75,-2.5);
\draw[draw=black] (4.75,-2.5) -- (4.75,-3);

\draw[draw=black] (7.75,-0.25) -- (7.75,-4);
\draw[draw=black] (7.75,-4) -- (4.5,-4);
\draw[draw=black] (1.75,-0.25) -- (1.75,-4);
\draw[draw=black] (1.75,-4) -- (4.5,-4);

\draw[draw=black] (7.75,-0.25) -- (7.75,-5);
\draw[draw=black] (7.75,-5) -- (3.75,-5);
\draw[draw=black] (0.25,-0.25) -- (0.25,-5);
\draw[draw=black] (0.25,-5) -- (3.75,-5);

\draw[draw=black] (9.25,-0.25) -- (9.25,-6);
\draw[draw=black] (9.25,-6) -- (4.5,-6);
\draw[draw=black] (7.75,-0.25) -- (7.75,-5.5);
\draw[draw=black] (7.75,-5.5) -- (4.75,-5.5);
\draw[draw=black] (4.75,-5.5) -- (4.75,-6);
\draw[draw=black] (0.25,-0.25) -- (0.25,-6);
\draw[draw=black] (0.25,-6) -- (4.5,-6);
\end{tikzpicture}}
	\caption{Translation of the crossover-gadget replacing crossings presented in \cite{DBLP:journals/siamcomp/Lichtenstein82} for the setting of \textsc{Roma}. We connect variables $b$ and $b_{2}$ as well as $a$ and $a_2$, respectively, by utilizing straight-line gadgets, which saves us the trouble of creating clauses specifically to copy their assignments. The specific formula reads, as presented in \cite{DBLP:journals/siamcomp/Lichtenstein82}, as follows:
	$
	(\neg a_{1} \vee \neg\gamma) 
	\wedge 
	(a_{1} \vee b_{1} \vee \gamma) 
	\wedge 
	(b_{2} \vee \neg\delta) 
	\wedge
	(b_{2} \vee \neg\alpha) 
	\wedge
	(\neg\delta \vee \neg\alpha)	
	\wedge 
	(\alpha \vee \beta \vee \xi) 
	\wedge 
	(\neg\alpha \vee \neg\beta) 
	\wedge 
	(a_{2} \vee \neg\beta)
	\wedge 
	(\neg a_{2} \vee b_{1} \vee \beta) 
	\wedge 
	(a_{2} \vee \neg\alpha) 
	\wedge 
	(\neg a_{2} \vee \neg b_{2} \vee \alpha) 
	\wedge 
	(\neg b_{1} \vee \neg\beta) 
	\wedge
	(\neg b_{1} \vee \neg\gamma) 
	\wedge	
	(\neg\beta \vee \neg\gamma) 
	\wedge 
	(\gamma \vee \delta \vee \neg\xi) 
	\wedge
	(\neg\gamma \vee \neg\delta) 
	\wedge 
	(\neg a_{1} \vee \neg\delta) 
	\wedge 
	(a_{1} \vee \neg b_{2} \vee \delta) 
	$.}
	\label{fig:L-crossing}
\end{figure}
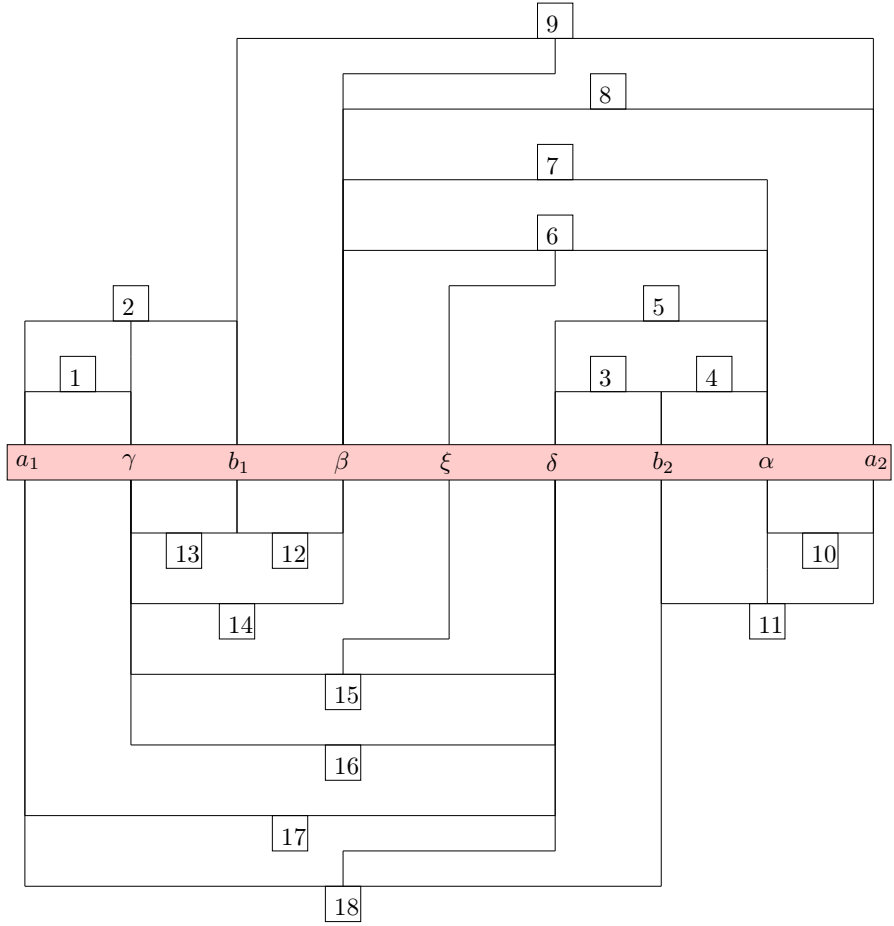

\begin{figure}[tb]\centering
	\scalebox{0.89}{\begin{tikzpicture}[scale=0.2]
\filldraw[fill=red!20 ](4,42) rectangle (55,43);

\variable{5}{36}
\variable{23}{36}
\variable{41}{36}

\node[text width=0.25cm, anchor=west] at (10,40){$a_{1}$};
\node[text width=0.25cm, anchor=west] at (28,40){$\gamma$};
\node[text width=0.25cm, anchor=west] at (46,40){$b_{1}$};

\node[text width=5cm, anchor=west] at (24,67){Clause 1};
\node[text width=5cm, anchor=west] at (24,97){Clause 2};
\node[text width=5cm, anchor=west] at (45,16){Clause 13};

\clause{11}{62}
\clause{11}{92}

\edge{5}{47}
\edge{5}{50}
\edge{5}{53}
\edge{5}{56}
\fanout{5}{59}
\edge{5}{62}
\edge{5}{65}
\edge{5}{68}
\edge{5}{71}
\edge{5}{74}
\edge{5}{77}
\edge{5}{80}
\edge{5}{83}
\edge{5}{86}
\fanout{5}{89}

\edge{29}{47}
\edge{29}{50}
\edge{29}{53}
\edge{29}{56}
\fanout{29}{59}
\fanout{35}{59}
\fanout{41}{59}
\edge{47}{62}
\edge{47}{65}
\edge{47}{68}
\edge{47}{71}
\edge{47}{74}
\fanout{41}{74}
\fanout{35}{74}
\fanout{29}{74}
\edge{29}{77}
\edge{29}{80}
\edge{29}{83}
\edge{29}{86}
\edge{29}{89}

\fanout{47}{47}
\edge{53}{50}
\edge{53}{53}
\edge{53}{56}
\edge{53}{59}
\edge{53}{62}
\edge{53}{65}
\edge{53}{68}
\edge{53}{71}
\edge{53}{74}
\edge{53}{77}
\edge{53}{80}
\edge{53}{83}
\edge{53}{86}
\fanout{47}{89}
\fanout{41}{89}
\edge{53}{92}
\edge{53}{95}
\edge{53}{98}


\clauser{32}{21}

\edge{8}{12}
\edge{8}{15}
\edge{8}{18}
\edge{8}{21}
\edge{8}{24}
\edge{8}{27}
\edge{8}{30}
\edge{8}{33}

\edge{26}{12}
\edge{26}{15}
\edge{26}{18}
\fanout{26}{21}
\edge{26}{24}
\edge{26}{27}
\edge{26}{30}
\edge{26}{33}

\edge{50}{33}
\edge{50}{30}
\edge{50}{27}
\edge{50}{24}
\fanout{50}{21}
\fanout{56}{21}
\fanout{62}{21}
\edge{68}{18}
\edge{68}{15}
\edge{68}{12}
\end{tikzpicture}}
	\caption{Here we show a detailed \textsc{Roma} construction for a part of the gadget shown above. Again, the core-line is depicted in red. We connect variables $a_{1}, \gamma$ and $a_{2}$ to clauses 1, 2 and 13.}
	\label{fig:L-Roma}
\end{figure}

Since the lower part of the literal-gadget has the exact same structure as the fanout-gadget, this part serves as a natural place to connect conductorboxes which are part of fanout- or straightline-gadgets. Through this connection we can propagate the assignment of a connected variable-gadget. Note that the part of the literal-gadget which resembles a fanout-gadget cannot be replaced by a structure resembling a simple straight-line gadget as this would destroy the property of being a parsimonious reduction. Due to the ability of the fanout-gadget to propagate the signal of one variable-gadget to multiple literal-gadgets, we can check for each literal-gagdet, even for those belonging to different clause-gadgets, whether the assignment of the connected variable-gadget allows the flow of the literal-gadgets, and thus the clause-gadgets they are a part of, to reach the Roma-cell and thus satisfy the clause. Even though each literal-gadget ends up with two points which can connect to conductor boxes, it does not matter whether the same variable is connected to both or just one of the same, as the signal will be propagated either way. Two different variable-gadgets will not be connected to the same literal-gadget.

How exactly the individual gadgets are connected is depicted in~\autoref{fig:L-Roma}, where a small part of the crossover-gadget in Lichtenstein's construction is realized as a Roma board, e.g., variable-gadget $\gamma$ is connected to literals from both the gadgets of Clause~1 and Clause~2 utilizing straightline- and fanout-gadgets. 

Finally, all variable gadgets will be connected via the core-line (as described above), which will make it possible that all paths lead to the Roma-cell, which is placed on the leftmost cell of the core-line. Only a single cell within a variable-gadget is assigned in order to represent the assignment of the corresponding variable. The resulting signal will be passed on to the connected literal-gadgets in an unambiguous manner---after that first assignment, there is only one valid assignment for the entire network consisting of a variable-gadget and straight-line, fanout-, and literal-gadgets. Overall, there are only two valid assignments for this network. The correctness of the overall construction follows by the lemmas stated so far.
Hence, the original 3-SAT formula $\varphi$ has a satisfying assignment if and only if the constructed \textsc{Roma} instance $R(\varphi)$ has a solution.

As can be seen in~\autoref{fig:L-Roma}, there are still undefined cells between the connected gadgets. Each of these cells will be prefilled and forms their own box. At this point, it is important to realize that we can construct paths from every single one of these cells either directly to the core-line, which leads to the Roma-cell, or, in case of an encapsulated space, like the one between Clause~1 and Clause~2, to the upper part of a clause-gadget.
If we connect cells to a clause-gadget like that, the Roma-cell can be reached from all of these cells if the corresponding clause is satisfied in the same fashion as it can be reached from the cells which are part of the gadget itself.
Clearly, the very details of the described filling of these boxes do not matter, it is only important 
that all paths lead to Rome.


\section{Further Complexity-Theoretic Consequences}

In this section, we examine our reduction more carefully, so that we can deduce several further interesting consequences.
The first one deals with potential limitations of exponential-time algorithms designed to solve a \textsc{Roma} instance, as provided by the Exponential-Time Hypothesis, or ETH for short; see \cite{DBLP:journals/jcss/ImpagliazzoP01}. This is interesting, as we will provide in the next section some matching algorithmic results.

\begin{theorem}\label{thm:ETH-Roma}
Assuming ETH, there is no $\mathcal{O}\left(2^{{o}(n)}\right)$-algorithm 
for 
solving $n\times n$-\textsc{Roma}~puzzles.
\end{theorem}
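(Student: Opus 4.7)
The plan is to chase the reduction from Theorem~\ref{thm:NPRoma} through Lichtenstein's planarisation more carefully, tracking the side length of the Roma board as a linear function of the source 3SAT parameters, and then invoking ETH. By the Sparsification Lemma of Impagliazzo, Paturi and Zane, it suffices to rule out an $\mathcal{O}\bigl(2^{o(m)}\bigr)$-time algorithm for 3SAT on $m$ variables with $k = O(m)$ clauses, whose incidence graph has $O(m)$ vertices and $O(m)$ edges.

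The first step is to fix a rectilinear drawing of this incidence graph on a grid of dimensions $O(m) \times O(m)$. I would place all variables on one horizontal line and all clauses on a parallel line (adding the variable cycle of $G(\varphi)$ around the outside), and route each of the $O(m)$ incidence edges through $O(1)$ bends. The bounding box then has width $O(m)$ because there are $O(m)$ vertices on the two horizontal lines, and height $O(m)$ because at most $O(m)$ edges need a horizontal track between the lines. Although the drawing may contain $\Theta(m^{2})$ edge crossings, these only affect the total \emph{area} of the drawing, not its side length.

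The second step is to realize this drawing as a Roma board. Each feature of the drawing --- variable vertex, clause vertex, bend, crossing, and straight-line segment --- is replaced by the corresponding constant-size gadget from Section~2 (variable-gadget, clause-gadget, corner-gadget, crossover-gadget built from Lichtenstein's formula as in \autoref{fig:L-crossing}, and straight-line/fanout-gadgets), and residual cells are padded by 1-boxes whose flow is routed to the core-line as sketched at the end of Section~2. Because each gadget uses only $O(1)$ cells and the bounding box is $O(m) \times O(m)$, the resulting instance $R(\varphi)$ lives on an $n \times n$ board with $n = O(m)$. Correctness is exactly the one already established for Theorem~\ref{thm:NPRoma}. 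Now suppose, for contradiction, that $n\times n$-\textsc{Roma} admits an $\mathcal{O}\bigl(2^{o(n)}\bigr)$-time algorithm. Running it on $R(\varphi)$ would decide the original 3SAT instance in time $\mathcal{O}\bigl(2^{o(m)}\bigr)$, contradicting ETH.

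The main obstacle is the first step: one has to pick a drawing convention that simultaneously keeps both the number of horizontal tracks and the number of vertical slots at $O(m)$, while reserving enough room between features so that the constant-size Roma gadgets tile without overlap. This is essentially bookkeeping, but a sloppy layout could blow one of the two dimensions up to $\Theta(m^{2})$ even when the other stays linear, which would only yield the much weaker lower bound $2^{o(\sqrt{n})}$ instead of the desired $2^{o(n)}$.
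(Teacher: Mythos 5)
Your proposal is correct and follows essentially the same route as the paper: apply the sparsification lemma, use Lichtenstein's rectilinear layout (variables and clauses on lines, axis-parallel routing), replace each of the up to $\Theta(N^2)$ crossings by a constant-area crossover-gadget so that the board side length stays $O(N)$ even though the planar formula gains $\Theta(N^2)$ variables and clauses, and then derive the contradiction with ETH from a hypothetical $\mathcal{O}\left(2^{o(n)}\right)$ algorithm. Your closing remark about keeping both dimensions linear is exactly the point the paper handles by appealing to Lichtenstein's specific grid-like design and the constant-size Roma realizations of the gadgets.
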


\begin{proof}
Here, we have to dig a bit deeper into the \NP-hardness reduction of Lichtenstein \cite{DBLP:journals/siamcomp/Lichtenstein82}.
He presents a specific design that aligns all variables along the $x$-axis and all clauses along the $y$-axis and then connects the variable to the 
clauses that they are contained in by axes-parallel lines. Of course, these straight lines will intersect, but he explains how to introduce crossover-gadgets that will introduce a constant number of new variables and new clauses to replace the crossings. We can first apply the famous sparsification lemma of Impagliazzo and Paturi \cite{DBLP:journals/jcss/ImpagliazzoP01} to guarantee that the number of variables and the number of clauses in the given 3-SAT formula are of the same order, say, $N$. 
As there are at most $N^2$ crossings in the rectangular drawing, there will be no more than $\mathcal{O}(N^2)$ many variables and clauses in the instance of \textsc{Planar 3SAT} that Lichtenstein proposes. Because each crossover-gadget can be simulated by a network in a \textsc{Roma} instance that uses area $\mathcal{O}(1)$ only, see the Figures~\ref{fig:L-coreLine}, \ref{fig:L-crossing}, and \ref{fig:L-Roma}, we can build a \textsc{Roma} instance to the given \textsc{3SAT} instance with at most $N$ variables and at most $N$ clauses that is of size $\mathcal{O}(N^2)$. Hence, if there was an algorithm that would solve    an $n\times n$-\textsc{Roma} puzzle in time $\mathcal{O}\left(2^{{o}(n)}\right)$, then one could solve any  \textsc{3SAT} instance with at most $N$ variables and at most $N$ clauses in time $\mathcal{O}\left(2^{{o}(N)}\right)$, contradicting ETH. 
\end{proof}

We now turn to a counting variant of our main combinatorial problem: 
\#\textsc{Roma} is the problem  to count the number of solutions of a Roma puzzle.
We might not like the idea that a given puzzle has too many solutions, as it seems to be the case then that
a human might find it awkward to play the game, as seemingly not much cleverness is needed.
This is why these puzzles are often designed in a way that they admit only a unique solution.
Still, counting the number of solutions of a Roma puzzle is quite infeasible.

\begin{theorem}\label{thm:CountRoma}
$\#$\mdseries\textsc{Roma} is $\#$\Ptime-complete.
\end{theorem}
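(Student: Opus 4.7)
The plan is to establish membership in $\#\Ptime$ and then $\#\Ptime$-hardness via the parsimony of the reduction of Theorem~\ref{thm:NPRoma}. Membership is routine: a Roma solution $\omega\colon \mathcal{C}\to\{\circ,\RA,\LA,\UA,\DA\}$ is encoded in $\mathcal{O}(n^2)$ bits, and both the box and graph conditions can be verified in polynomial time by scanning same-box pairs and by testing acyclicity, weak connectivity, and uniqueness of $c_{\mathcal{R}}$ as a sink in $G(\omega)$.

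For hardness, I would show that $\varphi\mapsto R(\varphi)$ is in fact a parsimonious reduction, i.e., that the number of valid completions of $R(\varphi)$ equals the number of satisfying assignments of $\varphi$. Lemmas~\ref{lem:corner-gadget},~\ref{lem:straight-line gadget},~\ref{lem:fanout}, and~\ref{lem:literal-gadgets} together show that the corner-, straight-line-, fanout-, and literal-gadgets each admit a \emph{unique} valid completion once any one of their empty cells is fixed, while Figure~\ref{fig:vgf} exhibits exactly two valid completions of the variable-gadget, corresponding to the two truth values of the variable it represents. The inter-gadget filler cells form preset 1-boxes and hence introduce no further choices. Combining these facts with Lemma~\ref{lem:clause-gadget}, each choice of a truth value at every variable-gadget deterministically forces every remaining cell, and the resulting assignment is valid if and only if every clause-gadget contains at least one literal-gadget that lets the flow exit downward, i.e., exactly when the underlying Boolean assignment satisfies $\varphi$. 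This yields the desired bijection.

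To transfer $\#\Ptime$-hardness to $\#\textsc{Roma}$, one invokes the $\#\Ptime$-completeness of $\#\textsc{Planar 3SAT}$, which follows from Lichtenstein's reduction being parsimonious: the auxiliary variables $\alpha,\beta,\gamma,\delta,\xi$ of the crossover formula in Figure~\ref{fig:L-crossing} admit a unique satisfying extension for every truth assignment of the original variables, and our replacement of some copy-clauses by straight-line gadgets in $R(\varphi)$ inherits this property. The main obstacle I anticipate is verifying parsimony inside the clause-gadget itself: when more than one literal-gadget simultaneously allows the 1-box flow to escape downward, one must carefully check that the fully preset 1-box cycle still induces a single deterministic flow reaching the Roma-cell, so that no spurious extra completions arise and the bijection with satisfying assignments is preserved.
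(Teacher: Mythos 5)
Your proposal is correct and follows essentially the same route as the paper: membership via a polynomial-time verifiable witness (an NTM whose accepting paths are in bijection with the solutions) and hardness via the parsimony of the reduction from \textsc{Planar 3SAT}, where the paper simply cites the known $\#$\Ptime-completeness of \textsc{Planar $\#$3SAT} instead of rederiving it from the parsimony of Lichtenstein's crossover formula as you sketch. Your anticipated obstacle inside the clause-gadget is not a real one: the cells of the 1-box cycle are all preset, so no matter how many literal-gadgets let the flow escape downwards, the clause-gadget offers no additional choices and contributes either zero or exactly one completion, which is precisely the (briefly stated) parsimony check in the paper.
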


\begin{proof}
As it is easy to design a Turing machine that first guesses an assignment (a polynomial-size witness suffices by \NP-membership) and then deterministically 
verifies the validity of it, we can also design a nondeterministic polynomial-time Turing machine that has as many accepting paths as the original
\textsc{Roma} instance has solutions.
Conversely, as the counting variant \textsc{Planar $\#$3SAT} is $\#$\Ptime-complete~\cite{doi:10.1137/S0097539793304601} and the reduction from \textsc{Planar 3SAT} is parsimonious (this can be checked by carefully going through the proofs of the previous lemmas again), the number of satisfying assignments of this 3-SAT formula equals the number of solutions of the constructed \textsc{Roma} instance. 
\end{proof}

Finally, for the design of uniquely solvable Roma puzzles, it would be good to know how many hints have to be added (as preset arguments) to make a puzzle uniquely solvable. We call this problem \textsc{FCP \textsc{Roma}}, to be spelled out as \textsc{Fewest Clues Problem}. More formally, it is asked if there exist at most $k$ empty cells that can be assigned in such a manner that the remaining instance  only has a single valid assignment, given a \textsc{Roma} instance and an integer~$k$ as inputs.
This relates to \textsc{FCP \textsc{Planar 3SAT}}. Here, we are given a planar \textsc{3SAT} instance and an integer~$k$, and it is asked if there exists a partial assignment of at most $k$ variables such that the remaining formula  has only one satisfying assignment.

\begin{theorem} \label{thm:FCPRoma}
\textsc{FCP \textsc{Roma}} is $\Sigma^{P}_{2}$-complete.
\end{theorem}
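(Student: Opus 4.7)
The plan is to establish both the $\Sigma^{P}_{2}$ upper bound and the matching hardness lower bound, the latter by a parsimonious reduction from \textsc{FCP Planar 3SAT}.

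For membership in $\Sigma^{P}_{2}$, a yes-instance $(R,k)$ of \textsc{FCP Roma} is characterised by a formula of the shape
\[
\exists (S,\sigma_{S},\omega)\,\forall \omega' : (\omega \text{ valid completion of } R[S\to\sigma_{S}]) \wedge (\omega' \text{ valid completion of } R[S\to\sigma_{S}] \implies \omega'=\omega),
\]
where $S$ is a set of at most $k$ empty cells equipped with an arrow assignment $\sigma_{S}$, and $\omega,\omega'$ range over total arrow assignments to the board. The inner predicate is polynomial-time checkable via the box and graph conditions from Section~2, so \textsc{FCP Roma} lies in $\Sigma^{P}_{2}$.

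For hardness, I would reduce from \textsc{FCP Planar 3SAT}, which is $\Sigma^{P}_{2}$-complete by the standard meta-theorem that any \NP-hardness proof obtained via a parsimonious reduction from \textsc{(Planar) 3SAT} lifts to a $\Sigma^{P}_{2}$-hardness proof for the corresponding \textsc{FCP} variant. The reduction $\varphi\mapsto R(\varphi)$ from Theorem~\ref{thm:NPRoma} is parsimonious (as already exploited in Theorem~\ref{thm:CountRoma}); the additional structural fact I need is that the bijection between satisfying assignments of $\varphi$ and valid completions of $R(\varphi)$ is realised by the variable-gadgets, each of which has exactly two valid internal states encoding the truth value of its variable, with all other gadget states forced by the propagation Lemmas~\ref{lem:straight-line gadget}--\ref{lem:clause-gadget}. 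The reduction maps $(\varphi,k)$ to $(R(\varphi),k)$. For the forward direction, a partial assignment of $k$ variables that makes $\varphi$ uniquely satisfiable is realised by presetting, for each chosen variable, a single empty cell inside the corresponding variable-gadget so as to pin the gadget to the intended state; the propagation lemmas then yield a unique completion of $R(\varphi)$.

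The main obstacle is the backward direction: starting from $k$ additional presets that render $R(\varphi)$ uniquely completable, I need to produce a partial assignment on at most $k$ variables of $\varphi$ that uniquely satisfies it. To this end I would classify every empty cell of $R(\varphi)$ according to the gadget that contains it (variable-, straight-line-, fanout-, literal-, or clause-gadget) and invoke the propagation lemmas to trace the cell's forced state back to a unique variable-gadget, so that presetting the cell is tantamount to pinning the truth value of exactly one variable (or is simply redundant, in which case it can be discarded). The cells of the core-line and of the 1-box filler between gadgets are already preset in $R(\varphi)$ and hence are not candidates for additional clues, which keeps this book-keeping clean. Parsimoniousness of the reduction then guarantees that the at most $k$ induced variable fixings make $\varphi$ uniquely satisfiable, completing the equivalence.
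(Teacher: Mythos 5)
Your proposal is correct and follows essentially the same route as the paper: membership via the generic $\exists\forall$ characterization (the paper cites the result of Demaine et al.\ that the fewest-clues version of any \NP{} problem lies in $\Sigma_2^P$), and hardness by reusing the parsimonious reduction from \textsc{Planar 3SAT} with the same budget $k$, one clue cell inside a variable-gadget pinning the corresponding variable --- your more detailed treatment of the backward direction merely spells out what the paper leaves implicit. One small caution: the $\Sigma_2^P$-completeness of \textsc{FCP Planar 3SAT} should be quoted directly from Demaine et al.\ rather than from a ``parsimonious reductions lift to FCP'' meta-theorem, since parsimony alone does not in general provide the clue correspondence --- which is precisely the property you (rightly) verify by hand for the Roma gadgets.
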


\begin{proof}
	We constructed an instance of \textsc{Roma} which is equivalent to a given instance of \textsc{Planar 3SAT} above. Instead of a partial assignment of $k$ variables, we now assign $k$ empty cells. For each assignment of a variable in the instance of \textsc{Planar 3SAT} reduced from, we need to assign exactly one cell in the resulting instance of \textsc{Roma}, specifically one cell of the corresponding variable-gadget as described above. Thus, the number of additional variables to be assigned $k$ is preserved. As \textsc{FCP \textsc{Planar 3SAT}} is $\Sigma^{P}_{2}$-complete (see 
\cite{DBLP:journals/tcs/DemaineMSWA18}), \textsc{FCP \textsc{Roma}} is $\Sigma^{P}_{2}$-hard. Conversely, in  Section 3.1 of \cite{DBLP:journals/tcs/DemaineMSWA18} it is shown that the fewest clues problem of every  problem from \NP is in $\Sigma^{P}_{2}$.  
\end{proof}

After exhibiting these complexity-theoretic limitations of our problem, it is interesting to see algorithms that can possibly meet the limitations. This is the theme of the next section, where we first explain a rather simple branching algorithm and then sketch a more sophisticated dynamic programming algorithm.

\section{Algorithms for Solving \bfseries{\scshape{Roma}}}

\subsection*{A Search Tree Algorithm for \mdseries\textsc{Roma}}

Let $n^2$ be the total number of cells in a given instance of \textsc{Roma} $\mathcal{R}$. Let $k$ be the number of empty cells in $\mathcal{R}$. 
A first approach would test all possible four assignments of each empty cell.
This will, at worst, result in a running time of $\mathcal{O}(4^{k} \cdot n^2)$, because we have to check all $n^2$ cells to decide if the assignment is valid or not for all $4^{k}$ possible assignments. Naturally, this will be faster the fewer empty cells $\mathcal{R}$ has. Overall the algorithm is polynomial in $n$ and exponential in $k$. This shows \textsc{Roma} to be fixed parameter tractable in the standard parameter~$k$. But can we do better? This will be examined in the following.

Consider any  $c_{ij}\in\mathcal{E}_{\mathcal{R}}$. From a naive point of view, there are 4 possible assignments for $c_{ij}$: $\LA, \RA, \DA$ and $\UA$. However, in many cases we can narrow it down to a single assignment in polynomial time. Consider the following:
We start with all 4 possible assignments in mind. We then check the other cells of box $b_{m}$, which contains $c_{ij}$. Each assignment already given within that box is no longer an option for $c_{ij}$. Next, we check each true neighbor of $c_{ij}$. We follow the flow of each of these cells. Should $c_{ij}$ pointing towards them lead to a closed cycle (consisting of 2 cells or more) this direction is not an option for a valid assignment. This includes borders of the board, since arrows cannot point out of the instance according to the rules of Roma. If only one valid assignment is left after these checks, we assign $c_{ij}$, creating  a new instance~$\mathcal{R'}$ in the process, and recurse our procedure with~$\mathcal{R'}$ as input.
In the worst case, no savings are possible this way if we only consider boxes of size one.
However, we get better estimates in other cases.

\begin{lemma}\label{lem:FPT}
If $\mathcal{R}$ is an $n\times n$-\textsc{Roma} puzzle with $k$ empty cells, but without empty cells that form 1-boxes, then $\mathcal{R}$
can be solved in time  $\mathcal{O}(3.32^{k} \cdot n^2)$.
\end{lemma}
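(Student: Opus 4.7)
The plan is a branch-and-reduce algorithm parametrized by the number $k$ of empty cells. First, I would exhaustively apply the polynomial-time reduction rules described in the paragraph preceding the lemma—box constraint (no repeated direction in a box), border constraint (no arrow pointing off the board), and small-cycle avoidance with true neighbors (following the induced flow)—each of which either forces a unique valid direction for an empty cell (and decreases $k$ by one) or certifies infeasibility. Each reduction costs at most $O(n^2)$ to apply and check.

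Once preprocessing is exhausted and $k>0$, I pick any box $b$ containing an empty cell. By the hypothesis no empty cell lies in a 1-box, so $|b|\in\{2,3,4\}$. Writing $e\ge 1$ for the number of empty cells in $b$, I branch over all assignments to those $e$ cells that are consistent with the box constraint and with 2-cycle avoidance inside $b$. If $N(b)$ denotes the number of such assignments, the recurrence is $T(k)\le N(b)\cdot T(k-e)+O(n^2)$, and the per-empty-cell branching factor is $N(b)^{1/e}$.

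The bulk of the proof is a case analysis showing $N(b)^{1/e}\le\sqrt{11}<3.317$ in every case. The binding case is a 2-box with both cells empty: of the $16$ direction pairs, $4$ are excluded by the box constraint, and exactly one further pair—the one in which the two (necessarily adjacent) cells point at each other—is excluded because it forms an internal 2-cycle, leaving $N(b)=11$ with $e=2$, i.e.\ factor $\sqrt{11}$. A 2-box with $e=1$ gives at most $3$ branches reducing $k$ by $1$, i.e.\ factor $3<\sqrt{11}$. For 3-boxes and 4-boxes the box constraint alone bounds $N(b)$ by the falling factorial $4!/(4-|b|)!$, so $(|b|,e)\in\{(3,3),(3,2),(4,4),(4,3),(4,2)\}$ gives per-cell factors at most $24^{1/3}$, $\sqrt 6$, $24^{1/4}$, $6^{1/3}$, $\sqrt 2$, all strictly below $\sqrt{11}$; the case $e=1$ in a 3- or 4-box would have been forced by preprocessing. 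Summing the $O(n^2)$ work per search-tree node gives the bound $\mathcal{O}(3.32^k\cdot n^2)$.

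The main obstacle I anticipate is pinning down the $11$ exactly: one has to argue that the two cells of a 2-box are grid-adjacent (so there is a unique internal 2-cycle pair to exclude), and also that we are not silently over-counting by excluding globally infeasible configurations that the preprocessing has already filtered out. Without the 2-cycle exclusion the branching factor degrades to $\sqrt{12}\approx 3.46$, which misses the advertised constant, so this single combinatorial step is what buys the factor $3.32$.
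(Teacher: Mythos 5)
Your proposal is correct and follows essentially the same route as the paper: the paper's proof also branches per box, identifies the fully empty 2-box as the worst case, counts $3\cdot 3+2=11$ admissible joint assignments (box condition plus excluding the mutually pointing pair of the two grid-adjacent cells), and concludes with $\sqrt{11}\le 3.32$ per cell, with larger boxes being strictly better. The only quibble is your remark that $e=1$ in a 3-box ``would have been forced by preprocessing''---the box condition there leaves two options, so it need not be forced---but since a branching factor of $2$ is well below $\sqrt{11}$, this does not affect the bound.
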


\begin{proof}
The worst case clearly happens with 2-boxes now. Let $b=\{c_{i,j},c_{i',j'}\}$ be a 2-box.
If one of the two cells is pre-set, then we have at most three possibilities for the empty cell.
Otherwise, we find four possibilities for setting $\omega(c_{i,j})$. In the three cases where $\omega(c_{i,j})$ does not point to $c_{i',j'}$,
we have each time three possibilities for setting $\omega(c_{i',j'})$. In the case where $\omega(c_{i,j})$ points towards  $c_{i',j'}$,
only two possibilities remain for  $\omega(c_{i',j'})$. Hence, altogether we have (at most) 11 possibilities to set $\omega$ on $b$, which is $\sqrt{11}\leq 3.32$ per cell.
\end{proof}

Further improvements are possible if there are  only few 2-boxes and no 1-boxes, but we refrain from giving further details here, because still, $k$ is of the order of $n^2$ if we assume that relatively few hints are given at the beginning. Therefore, the approach presented in the next subsection is (at least theoretically) more interesting.

\subsection*{A Dynamic-Programming Algorithm for \mdseries\textsc{Roma}}

We can get an algorithm that matches the lower bound of \autoref{thm:ETH-Roma}.

\begin{theorem}\label{thm:DPalgo}
There is an $\mathcal{O}\left(2^{\mathcal{O}(n)}\right)$-algorithm 
solving an $n\times n$-\textsc{Roma} puzzle.
\end{theorem}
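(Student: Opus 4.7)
The plan is to solve the puzzle by a left-to-right column-sweep dynamic programming. After placing arrows in columns $1,\dots,j$, I summarise the partial computation at the boundary between column $j$ and column $j+1$ by a state $(A_j, P_j, r_j)$, where $A_j$ is the tuple of arrows of the $n$ cells in column $j$ (kept because boxes straddling the cut may constrain the next column), $P_j$ is the partition of the $n$ cells of column $j$ into weak-connectivity classes of the already-placed subgraph, and $r_j\in\{0,1,\dots,n\}$ marks which class, if any, already contains $c_{\mathcal{R}}$. By \autoref{lem:all-paths-to-Rome}, every valid completion makes $G(\omega)$ an in-tree rooted at $c_{\mathcal{R}}$; because this tree lives inside the planar grid, the partition it induces on any vertical cut is necessarily \emph{non-crossing}. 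The number of non-crossing partitions of $n$ points is the Catalan number $C_n=\Theta(4^n/n^{3/2})$, so the total number of reachable states per column is at most $4^n \cdot C_n \cdot (n+1) = 2^{\mathcal{O}(n)}$, replacing the naive Bell-number bound $B_n = 2^{\Omega(n\log n)}$ one would get from arbitrary partitions.

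The transition from column $j$ to column $j+1$ enumerates every arrow assignment $A_{j+1}$ that respects the presets and the box condition (in particular, for each box straddling the cut, $A_j$ and $A_{j+1}$ must jointly use at most one arrow per cardinal direction). Given such $A_{j+1}$, I update $P_j$ to $P_{j+1}$ by introducing each new column-$(j+1)$ cell as a singleton block and merging blocks along the freshly added edges: a leftward arrow in column $j+1$ merges its block with that of its column-$j$ neighbour, a rightward arrow in column $j$ forwards its block to the same-row column-$(j+1)$ cell, and an up or down arrow inside column $j+1$ merges adjacent column-$(j+1)$ blocks. I reject the transition if (i) a merge breaks non-crossingness (which precisely corresponds to paths that would have to cross and hence form a cycle in the finished graph), (ii) a non-Roma block contains a cell that has no outgoing edge into column $j+1$ and no pending internal continuation within the block (this cell could not reach $c_{\mathcal{R}}$), or (iii) the Roma block absorbs a cell whose arrow would give $c_{\mathcal{R}}$ positive out-degree. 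Starting from the empty column-$0$ state, I accept after column $n$ iff the final partition has a single block carrying the Roma marker and no column-$n$ arrow points right; these checks exactly match the graph condition of \textsc{Roma}.

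Each transition costs $\mathrm{poly}(n)$, so the total running time is
\[
n \cdot 2^{\mathcal{O}(n)} \cdot 4^n \cdot \mathrm{poly}(n) = 2^{\mathcal{O}(n)},
\]
matching the ETH lower bound of \autoref{thm:ETH-Roma}. The hardest part will be establishing the correctness of the summary $(A_j, P_j, r_j)$: I will need to argue that the weak-connectivity partition (rather than any finer information about the left-half subgraph) is already enough to detect every future cycle or disconnection, and that the non-crossingness test really captures every cycle created by a merge. This uses that the intended solution is an in-tree and that in a planar grid drawing of such a tree the left-component classes of a cut are exactly the blocks of a non-crossing partition. Once this equivalence is pinned down, the Catalan bound on non-crossing partitions yields the claimed $2^{\mathcal{O}(n)}$ state space and running time.
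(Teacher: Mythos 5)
Your overall strategy (a sweep DP whose per-cut states are bounded via non-crossing, i.e.\ Catalan, structures) is in the same spirit as the paper's proof, but two concrete steps in your realization fail. First, your cycle test is wrong. For any reachable state the blocks of $P_j$ are traces on the cut of connected subgraphs drawn in a planar region, so they are \emph{automatically} non-crossing; a merge can therefore never ``break non-crossingness'', and your rejection rule (i) is vacuous. In particular it does not detect the cycles you must forbid: a cycle through the already-processed columns manifests itself as a new edge joining two cells that already lie in the \emph{same} block, which is a different event from a crossing. The correct test is exactly this same-block test, and to use it you must additionally argue that forbidding \emph{undirected} cycles suffices for the (directed) graph condition: every non-Roma cell carries exactly one outgoing arc and $c_{\mathcal{R}}$ none, so a solution has $n^2-1$ arcs, and with out-degree at most one every cycle of the underlying undirected graph is in fact a directed cycle (or an antiparallel pair); hence ``no same-block merges'' plus final weak connectivity plus no off-board arrows yields acyclicity and the unique sink. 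You yourself flag this soundness step as still to be pinned down, but the mechanism you propose (the non-crossingness check) cannot deliver it. The paper avoids the issue by tracking directed information explicitly: its bracket structure records, for each upward arrow on the sweep row, at which downward arrow the path through the forgotten part re-enters, so closing a directed cycle is detected directly.

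Second, your state does not suffice to enforce the box condition. A box has up to four cells and may, for example, be a horizontal strip occupying columns $j-2,\dots,j+1$: three of its cells lie strictly left of the cut and only the column-$j$ arrow is retained in $A_j$, so the constraint on its column-$(j+1)$ cell cannot be checked. You need to store, for every cut cell whose box continues across the cut, the set of directions still available in that box (a constant amount of data per cell); this is precisely the role of the paper's combined symbols of types 1 and 2. With this repair and the corrected cycle test, the state space is still $2^{\mathcal{O}(n)}$ and your column-sweep variant can be made to work. Minor further slips: rule (iii) is ill-posed, since $c_{\mathcal{R}}$ is preset to $\circ$ and never acquires positive out-degree --- what must be checked is that no \emph{other} cell becomes a sink, i.e.\ no arrow points off the board; and rule (ii) should be a per-block condition (a block with no representative in column $j+1$ can never reconnect), not a per-cell condition about outgoing edges.
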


The algorithm is based on a dynamic programming (DP) approach along the rows of a \textsc{Roma} puzzle board. The main difficulty in obtaining the claimed run-time bounds consists in the problem to check the graph condition of acyclicity. A naive approach would end up with a run-time of $\mathcal{O}\left(2^{\mathcal{O}(n^2)}\right)$, as a natural idea would be to memorize for every pair of cells on the `sweep row' (as we will call the current row) whether or not a path leads from
one cell to the other through the already processed area of the board. To overcome this difficulty, we make use of Catalan structures, similarly as proposed in  \cite{DBLP:journals/iandc/BodlaenderCKN15,DBLP:journals/jcss/DornFT12} for quite different problems on planar graphs that deal with connectivity constraints.
As the name suggests, these structures are related to a proper bracketing that models the paths finally leading to the Roma-cell. We develop a special syntax for these bracket structures to reflect their meaning with respect to configurations of the Roma game.

\begin{proof}
The basic idea of the algorithm is to use dynamic programming (DP) along the rows of a \textsc{Roma} puzzle board. This sliding row works similar as a sweep line in computational geometry, steadily moving downwards. We will therefore speak about the (current) \emph{sweep row} and the \emph{successor row}. 
To each row with $n$ squares, we associate a string of length at most $3n$ over the alphabet $\Sigma=\Delta\cup\Delta\times\Delta\cup \Delta\times\binom{\Delta}{2}\cup B$,
where $\Delta=\{\mbox{$\circ$},\UA,\DA,\RA,\LA\}$ is the alphabet of Roma cell states, and $B=\{\LB,\RB\}$ are brackets.
We can formulate further restrictions that would reduce the number of configurations considerably, but we refrain from giving these details here, as they are immaterial to the claim.
Let us explain the meaning of such a word encoding a \emph{row configuration} by an example for $n=6$:
\begin{equation}\label{sweeprow1}
\scalebox{.94}{$\DA\LB\LB\LA\LB\LP\UA,\{\LA,\DA\}\RP\RB\LP\DA,\{\RA,\LA\}\RP\UA\RB\UA\RB$}\ \text{refers to the row}\ \text{
\begin{tikzpicture}[scale=0.34]\draw [step =1, black]grid (6,1);\da{0}{0}\la{1}{0}\ua{2}{0}\da{3}{0}\ua{4}{0}\ua{5}{0}; \draw(0.5,1) .. controls (2.6,3) and (4,3) .. (5.5,1);
\draw(0.5,1) .. controls (3,2.7) and (3.7,2.5) .. (4.5,1);
\draw(2.5,1) .. controls (2.6,2) and (3.5,2) .. (3.5,1);
\end{tikzpicture}
}\end{equation}
but encodes much more information. 
It also tells the box information that is not shown in this picture, but that is important to keep from the previous rows, stored in an abstract fashion. This information is necessary to compute all configurations of the successor row.
\begin{itemize}
\item A symbol from $\Delta$ means, in the first place, that this symbol is sitting in that cell of the sweep row. Besides this, it can encode two different things: either the box of this cell is not continued in the successor row (in which case we say that the symbol is of type~0), or this is the only symbol already fixed for the box (i.e., this box was started on the sweep row), in which case we say that the symbol is of type~3. Which of the two cases occurs can be decided by the DP algorithm by checking the given puzzle board.
\item A symbol  $(a,b)\in\Delta\times\Delta$ says: (1) Symbol~$a$ is in that cell of the sweep row. (2) Because the box of this cell continues in the successor row, $b$ is the only symbol that is still available in that box. We also say that the combined symbol $(a,b)$ is of type~1.
\item A symbol $(a,\{b,c\}) \in \Delta\times\binom{\Delta}{2}$ means: (1) Symbol~$a$ is in that cell of the sweep row. (2) The box of this cell continues in the successor row and possibly beyond, and $b,c$ are the symbols that are still available in that box. We also say that the combined symbol $(a,\{b,c\}) \in \Delta\times\binom{\Delta}{2}$ is of type~2.
\end{itemize}
Using formal language terminology, from a row configuration $w\in\Sigma^*$, the \emph{row content} $w'\in\Delta^*$ can be retrieved by a morphism $h$ which maps
brackets to the empty word, projects $(a,x)\mapsto a$ for $(a,x)\in \Delta\times\Delta\cup \Delta\times\binom{\Delta}{2}$ and works as the identity on~$\Delta$; see our example.
A type that is bigger than zero tells the number of possibilities that are still available for the remaining cells of that box. In a sense, type~0 is an exception, as obviously no information has to be transferred into a ``fresh row''.

To understand how updates in the DP procedure could work, we continue with our example, see \autoref{fig:sweeprow2}.
In order to describe how the next row could be formed, we have to display a bit more of the board. The light blue first row is showing only one possibility of how the previous row (with respect to the sweep row in dark blue) could look like, not everything is enforced or known at this step. In the last row, a possible successor row is indicated in gray, although the two first gray arrows are enforced by the sweep row. The black arrows were given as hints in the very beginning.
\begin{figure}[H]
\begin{center}
\mbox{\hspace{-7em}}\ 
\begin{minipage}[b]{9em}
\begin{tikzpicture}[scale=0.5]\draw [step =1, black]grid (6,3); \cra{0}{2}\da{1}{2}\cra{2}{2}\cda{3}{2}\cua{4}{2}\cla{5}{2}\bda{0}{1}\bla{1}{1}\bua{2}{1}\bda{3}{1}\bua{4}{1}\bua{5}{1}\gda{0}{0}\la{1}{0}\gda{2}{0}\gra{3}{0}\gua{4}{0}\gla{5}{0};
 \draw(1.5,3) .. controls (2.6,5) and (3.4,5.1) .. (4.5,3);
\draw(3.5,3)--(3.5,4);
\draw [line width=0.7mm, black] (0,0) -- (0,3);
\draw [line width=0.7mm, black] (1,2) -- (1,3);
\draw [line width=0.7mm, black] (1,0) -- (1,1);
\draw [line width=0.7mm, black] (2,1) -- (2,3);
\draw [line width=0.7mm, black] (3,0) -- (3,3);
\draw [line width=0.7mm, black] (5,1) -- (5,3);
\draw [line width=0.7mm, black] (4,0) -- (4,1);
\draw [line width=0.7mm, black] (6,0) -- (6,3);
\draw [line width=0.7mm, black] (0,1) -- (2,1);
\draw [line width=0.7mm, black] (1,3) -- (5,3);
\draw [line width=0.7mm, black] (1,2) -- (2,2);
\draw [line width=0.7mm, black] (3,2) -- (5,2);
\draw [line width=0.7mm, black] (1,0) -- (3,0);
\draw [line width=0.7mm, black] (1,3) -- (3,3);
\draw [line width=0.7mm, black] (4,1) -- (6,1);
\end{tikzpicture}\end{minipage}
\quad
\begin{minipage}[b]{4em}
With possible\\ bracket structure
\end{minipage}
\quad
\begin{minipage}[b]{12em}
$\LP\RA,\{\DA,\LA\}\RP\DA\LB\RA\DA\UA\RB\LP\LA,\UA\RP$\\[1ex]
$\DA\LB\LB\LA\LB\LP\UA,\{\LA,\DA\}\RP\RB\LP\DA,\{\RA,\LA\}\RP\UA\RB\UA\RB$\\[1ex]
$\DA\LB\LA\DA\LP\RA,\LA\RP\LP\UA,\{\DA,\RA\}\RP\RB\LP\LA,\{\DA,\RA\}\RP$
\end{minipage}
\end{center}
\caption{The sweep row inherits information from its predecessor and passes it to its successor.}
\label{fig:sweeprow2}
\end{figure}
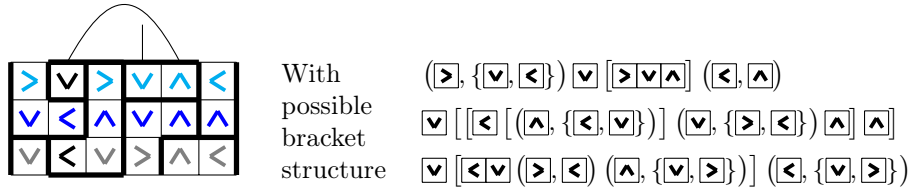
How does a \emph{successor row arrow consistency check} work? We are considering the string $\DA\LA\DA\RA\UA\LA\in D^6$ as a possible successor, one of many that we have to check.
The first $\DA$ is possible, as this starts a new box. Note that this is the only choice for this cell, as $\RA$ would contradict with the pre-set $\LA$ in the second cell, and further neither of $\LA$ nor $\UA$ are possible, as $\LA$ is pointing to the left wall and $\UA$ would contradict $\DA$ on the sweep row above. The next symbol $\LA$ is given in the beginning. The symbol above on the sweep row does not contradict with this preset hint. As the box is finished in this row, the symbol has no second component.
The third symbol is $\DA$; it sees $\LP\UA,\{\LA,\DA\}\RP$ above, which means that in the current box, the only symbols are $\LA$ and $\DA$ that could be put into the cell currently considered. But as left to it $\LA$ was fixed, the symbol $\DA$ is indeed enforced. As the fourth symbol, we chose $\RA$. Again, the box was already considered in the sweep row, and the symbol above was $\LP\DA,\{\RA,\LA\}\RP$, so that $\RA$ is one of the two possibilities. However, as the box is further continued beyond the successor row, we have to introduce the combined symbol $\LP\RA,\LA\RP$. The last two symbols, $\UA$ and $\LA$, are both in the same (new) box and hence not conflicting the sweep row above. However, they get a second component $\{\DA,\RA\}$ to propagate the possibilities for the 4-box that will be completed in the next row. The reader may check that also the predecessor row of the sweep row is arrow-consistent with the sweep row.

We are now describing the role of the bracket structure that has to be used in the \emph{successor row cycle consistency check} explained below. This is important as we have to prevent directed cycles in a constructed solution. 
We are first assuming that the Roma-cell is not in the upper (`forgotten') part of the board.
This means that every path that enters the upper zone via a symbol $\UA$ in the sweep row must leave the upper zone again via a symbol $\DA$ in the sweep row.
How exactly this path goes through the upper part is not important. However, one can imagine these paths as forming a kind of river system. Planarity ensures that these paths never cross, but they could merge. Continuing with the analogy, one could try to draw some river basins. This is how one could interpret the drawings of the lines in the pictures. In the sweep row picture in \eqref{sweeprow1}, the rightmost upward arrow starts a path that goes all way along down again to the first downward arrow. The penultimate upward arrow again starts a river that flows (to stay in the picture) to the North-West, turning South again to leave the area again via the first downward arrow. However, the first upward arrow starts a river that flows to the North-East before turning South again to leave the upward area via the second downward arrow.
The role of the brackets is to describe these river basins in a unique way. In our example, the first path that we described corresponds to the outermost matching pair of brackets. How is this constructed? The upward arrow points North-West, and this lets us insert a closing bracket to the right of the upward arrow symbol. The matching opening bracket is inserted to the right of the downward arrow that indicates where this path leaves the upper part again. To formulate this bracket setting rule more explicitly: Brackets are set to include the upward arrows (where the river starts) but excludes the downward arrow (where the mouth of the river is). The reason behind this convention of excluding the downward arrows from the brackets is that there might be rivers that start in the upper part but share a mouth with a river that actually starts at the sweep row. 
Furthermore, downward arrows could receive paths from two directions, and it would destroy the meaning of the bracket structure if we would have an opening bracket to the left of a downward arrow and a closing bracket to its right.
The second path starts at the penultimate upward arrow, so we insert a closing bracket  to the right of the upward arrow symbol. We insert the matching opening bracket to the right of the downward arrow that indicates where this path leaves the upper part again. Finally, the innermost pair of brackets indicates a flow from the first upward arrow to the North-East. Therefore, an opening bracket is inserted to the left of this upward arrow, and we insert a matching closing bracket to the left of the downward arrow.
Our conventions also imply that for each upward arrow, we have a pair of brackets, either with the opening bracket sitting to the left of the upward arrow, or with the closing bracket sitting to the right of the upward arrow,
while for the downward arrows, it could be the case that there is no bracket attached to it (which means that these rivers do not start at the sweep row), or that even several brackets are attached to it. Notice that closing brackets are always sitting immediately to the left of a downward arrow, while opening brackets are sitting to the right of a downward arrow.

Let us add one further thought about the brackets: If we have, say, two paths that start at the beginning of a row and move somehow through the upper part of the board, to come down in two different downward arrows, then it cannot be the case that the first path ends at the penultimate downward arrow, while the second path ends at the last downward arrow, because this would mean that these two paths have crossed, which is impossible as the overall structure is planar. Similar ideas have been exploited implicitly in \cite{DBLP:journals/iandc/BodlaenderCKN15,DBLP:journals/jcss/DornFT12}; the so-called Catalan structures are referring to our explicit bracketings which catch these ideas in a transparent way. We will exploit these connections again when counting the possible configurations below.

The  \emph{successor row cycle consistency check} will do two main things. Obviously, a successor row has to be rejected if it closes a path through the upper part.
Moreover, any bracket structure in a sweep row that is otherwise consistent with the currently considered successor row induces a bracket structure on the successor row in a unique manner. Let us describe this last point with our example again. In the board picture of \autoref{fig:sweeprow2}, we see the bracket structure and a sketch of the corresponding river basin of the predecessor row. There is only one path starting at the only upward arrow of that row, going North-West and entering that row again at a pre-set downward arrow.
This is supposed to be the path information that was propagated to the sweep row in the following manner: For each upward arrow of the sweep row, we checked where it leads to in the previous row. For instance, the last upward arrow points to a left arrow; following this further (at the time when we constructed this configuration of what is now the sweep row, the predecessor row was known to full extent), we encounter an upward arrow whose matching downward arrow happens to be pre-set. This path moves down to the sweep row and on the sweep row, the left arrow moves to another downward arrow. Hence, we can draw this path (as shown in the sweep row picture in~\eqref{sweeprow1}) from the rightmost to the leftmost cell of the sweep row. The other connections are determined in an analogous fashion. The reader is invited to also check the bracket structure of the successor row, which again consists of a single pair of brackets only.

This description should suffice to explain how to formally prove the claim by induction.

However, notice that there could be several sweep row configurations that are compatible with one string describing the cell contents of the successor row. Therefore, there could be several different bracket structures that can be associated to such a string over the alphabet~$\Delta$. 

In order to count the possible configurations, we will go step-by-step. 

First, assume that we consider a fixed bracket structure, and we also fixed for each bracket pair which of the two brackets (opening or closing) is associated to an upward arrow. In case we have combined symbols, we only consider whether the first component contains an upward arrow.
Assume we have $n^{[\,]}$ bracket pairs involved (as we see this many upward arrows on the sweep row) and that we have $n^{[\,]}_{\scalebox{.5}{\DA}}$ downward arrows that could be matched with these brackets on the sweep row.

As the bracket structure is fixed, there are no more than $\binom{ n^{[\,]}+n^{[\,]}_{\scalebox{.5}{\DA}}}{n^{[\,]}_{\scalebox{.5}{\DA}}}$ many possibilities for such a matching.
(In fact, there are less, as the bracket structure is neglected in this type of counting, but we are not optimizing the counting here.) 

Next, assume that we have $n^{[\,]}_{\scalebox{.5}{\UA}}$ upward arrows that are involved in potential cycles. Recall that if the Roma-cell is in the upper part of the board, then not all upward arrows need to be in a path that returns to the sweep row. As each such upward arrow gives rise to a bracket pair, we only have to decide if the opening or closing bracket of this bracket pair is associated to such an upward arrow. It is well-known that the number of ways we can properly form bracket expressions with $p$ pairs of brackets is given by $\frac1{p+1}\binom{2p}{p}$, which is upper-bounded by $4^p$  (this is also known as Catalan numbers; see, e.g., Chapter 14 in \cite{LinWil2006}). Therefore, we have $8^{n^{[\,]}_{\scalebox{.5}{\UA}}}$ many possibilities to create bracket structures and fix the association to upward arrows. 

Next, we have to reason a bit about the type of the letters from $\Delta$ and how to count them. As the Roma-cell is unique and pre-set, this does not enter the following considerations.
Recall that the type of a letter is completely determined by the  given  box structure. Therefore, when we consider the set of configurations of the sweep row, each cell may host different letters, but all of them have the same type. Clearly, there are only 4 different letters of type~0 and of type~3. There are 12 different combined letters of type~1, because the combined letters $(a,a)\in\Delta\times\Delta$ would never appear, as it makes no sense to put the symbol $a$ into a cell and tell, at the same time, that the last symbol that could be set in that box is also~$a$. Similarly, there are also 12 different combined letters of type~2, because when we put $a$ into the cell itself, there are only $\binom32=3$ many possibilities to select two symbols from the 3 remaining ones. In the worst case, this gives a factor of 3 for each of the 4 basic letters, which corresponds to the factor $3^n$ in \eqref{eq:space-consumption}.

Looking at a row with $n$ cells, we can associate to each cell one of the letters $\LA$, $\RA$ or one of the states $\DA'$ or $\UA'$, referring to downward or upward arrows that are not involved in (potential) cycles. Altogether, this gives us the following count for the number of configurations and hence for the space consumption of our DP algorithm;
for reasons of readability, we set $k=n^{[\,]}_{\scalebox{.5}{\UA}}$ and $t=k+n^{[\,]}_{\scalebox{.5}{\DA}}$.
\begin{equation}\label{eq:space-consumption}
3^n\sum_{t=0}^n\binom{n}{t}4^{n-t}\left(\sum_{k=0}^{t}\binom{t}{k}8^k\right)=3^n\sum_{t=0}^n\binom{n}{t}4^{n-t}9^t=3^n13^n=39^n.
\end{equation}
For the update from one sweep row to the next, we need to cycle through the (only) $4^n$ `new words', comparing them against the (at most) $39^n$ many configurations of the `old sweep row'. Namely, the type of a letter is determined from the box structure. 
Altogether, this shows the claim of the theorem.
Again, we could optimize the counting of the running time, because there is a trade-off between the types and the degree of freedom, but this would not change the overall result.

\end{proof}

\section{Discussion}

Games often come with some sort of didactical message.
In our case, it is not hard to see the Roma puzzle, being defined as a board game, to be generalizable to a game on graphs.
In other words, one could imagine this to be a gentle introduction into graph-theoretic concepts.
The game Generalized Roma that we propose is played on a weakly connected directed graph $G_{\mathcal{R}}=(V,E)$ with a special vertex $c_{\mathcal{R}}$ , the Roma-vertex, having out-degree zero. Moreover,
there is a set of  hints $H\subseteq E$.
The task is to delete edges (not from the hints~$H$) so that the resulting graph is a directed acyclic graph that is weakly connected and has maximum out-degree one. This could act as an introduction to graph-theoretic notions like  spanning trees or feedback arc sets.  The proof of \autoref{lem:all-paths-to-Rome} is also valid in this setting; this means again that all paths lead to Rome.
So far, the box condition has been neglected. This can be modeled as follows. We introduce  colors to the possible different orientations of each edge, i.e., we have a mapping $\chi_E:E\to C_E$ for the set of edge colors $C_E$; also, we assign colors to vertices by a mapping $\chi_V:V\to C_V$. Then, the box condition says that for each vertex color $c\in C_V$, the set of vertices $\chi_V ^{-1}(c)$ obeys that no two oriented edges $e_1,e_2$ that originate from vertices from $\chi_V ^{-1}(c)$ have the same color, i.e., $\chi_E(e_1)\neq\chi_E(e_2)$. This corresponds to having different arrows in the original Roma boxes.
This generalization would also allow for another specialization and hence to different game board designs. For instance, instead of taking quadratic cells (and hence four directions for the arrows), one could also think of triangular cells (hence, three directions for the arrows) or hexagonal cells (with six directions for the arrows); the box conditions would have to be adapted, too.

As an algorithmic challenge, it would be nice to further improve on the algorithm proposed in \autoref{thm:DPalgo}. It is an open challenge to design alternative algorithms that obtain running times within $\Oh^*(c^n)$ for solving $n\times n$-Roma puzzles, best without using exponential space.
Notice that our approach could be also interpreted in terms of pathwidth, considering the `board graph' that is a grid.
In this connection, we like to point to van der Zanden's essay on puzzles and treewidth~\cite{DBLP:conf/birthday/Zanden20}. 
There are other types of game problems where the same challenge is `on the board', for instance regarding finding a minimum dominating set of queens on an $n\times n$ chess board, see~\cite{Fer10a}.
A possible way out to meet this challenge could be to use the fact that planar graphs (e.g.,  grid graphs) with $n^2$ vertices not only have treewidth of $n$, but also \emph{treedepth} of~$n$. Moreover, there have been recent papers \cite{DBLP:conf/stacs/HegerfeldK20,DBLP:conf/wg/NederlofPSW20} that show that single-exponential algorithms with parameter treedepth are possible for several problems, including those that involve connectivity and cycle questions like \textsc{Connected Vertex Cover} and \textsc{Hamiltonian Cycle}, and which use polynomial space only.
We also refer to the general discussions in~\cite{CheRRV2018}.
In this direction, there would be also the challenge to test different algorithmic approaches on concrete Roma puzzles, including more standard approaches as using (I)LP solvers.


\nocite{doi:10.1137/S0097539793304601}
\nocite{DBLP:journals/tcs/DemaineMSWA18}
\nocite{LinWil2006}
 \clearpage
\bibliography{mybib}
\end{document}